\newtheorem{proposition}{Proposition}
\newcolumntype{L}[1]{>{\raggedright\let\newline\\\arraybackslash\hspace{0pt}}m{#1}}
\newcolumntype{C}[1]{>{\centering\let\newline\\\arraybackslash\hspace{0pt}}m{#1}}
\newcolumntype{R}[1]{>{\raggedleft\let\newline\\\arraybackslash\hspace{0pt}}m{#1}}
\newcommand{\rev}[1]{{#1}}
\begin{document}
\sloppy
\allowdisplaybreaks
\newcolumntype{H}{>{\setbox0=\hbox\bgroup}c<{\egroup}@{}}




\large
\title{Enhanced arc-flow formulations to minimize weighted completion time on identical parallel machines}

\author{
{\bf Arthur Kramer, Mauro Dell'Amico, Manuel Iori}\\
Dipartimento di Scienze e Metodi dell'Ingegneria\\ Universit\`{a} degli Studi di Modena e Reggio Emilia, Italy\\
arthur.kramer@unimore.it, mauro.dellamico@unimore.it, manuel.iori@unimore.it\\ \\
}

\date{}

\maketitle

\vspace{-0.5cm}
\begin{center}
\end{center}

\begin{abstract}
	{We consider the problem of scheduling a set of jobs on a set of identical parallel machines, with the aim of minimizing the total weighted completion time. The problem has been solved in the literature with a number of mathematical formulations, some of which require the implementation of tailored branch-and-price methods. In our work, we solve the problem instead by means of new arc-flow formulations, by first representing it on a capacitated network and then invoking a mixed integer linear model with a pseudo-polynomial number of variables and constraints.
	According to our computational tests,  existing formulations from the literature can solve to proven optimality benchmark instances with up to 100 jobs, whereas our most performing arc-flow formulation solves all instances with up to 400 jobs and provides very low gap for larger instances with up to 1000 jobs.}
\end{abstract}

%
\onehalfspace
\section{Introduction}

We are given a set $J = \{1, 2, \dots,n\}$ of jobs to be scheduled on a set $M = \{1, 2, \dots,m\}$ of identical parallel machines. Each job $j \in J$ has a processing time $p_j$ and a penalty weight $w_j$. A schedule is feasible if each job is assigned to a unique machine and processed without preemption, and each machine processes at most one job at a time. Let $C_j$ define the completion time of job $j$, our goal is to find a feasible schedule for which the total weighted completion time, $\sum_{j=1}^{n} w_j C_j$, is a minimum. The problem is denoted as $P||\sum w_jC_j$ as in the three-field classification of \cite{Graham1979}. In the following, we suppose that processing times and penalty weights take integer values.

The $P||\sum w_jC_j$ was proven to be $\mathcal{NP}$-hard by \cite{Bruno1974}. Despite being a classical production scheduling problem, with real-world applications, it has not received much attention in {the literature} and cannot be considered a well solved problem.
To the best of our knowledge, state-of-the-art exact methods for the $P||\sum w_jC_j$ are the branch-and-bound algorithms developed by \cite{AzizogluandKirca19999b}, \cite{ChenandPowell1999} and \cite{vandenAkkeretal1999}, the last two of which make use of column generation techniques {to solve the relaxed problem at each node}. Aside from these works, the $P||\sum w_jC_j$ can be solved by adapting mathematical formulations originally developed for similar one machine or unrelated parallel machines problems. Among these formulations, we mention the {\em time indexed} (TI) \emph{mixed integer linear programming} (MILP) model  by \cite{SousaWolsey1992}, the {convex integer quadratic programming} model  by \cite{Skutella2001} {and the preemptive TI model by \cite{BulbulandSen2017}. According to our tests, these methods fail in solving some $P||\sum w_jC_j$ benchmark instances involving just 100 jobs.
	
In this paper, we solve exactly large-size instances of the $P||\sum w_jC_j$ by focusing on the development of  \emph{arc-flow} (AF) formulations. AF formulations represent the problem as a capacitated network with side constraints, and consist of a MILP model with a pseudo-polynomial number of variables and constraints.
AF formulations have been used to model many combinatorial optimization problems (see, e.g., \citealt{W77} and \citealt{ValeriodeCarvalho1999}), and have recently obtained successful results on important areas such as bin packing and cutting stock problems (see, e.g., the recent survey by \citealt{Delorme2016}).
\rev{For the area of scheduling, we are only aware of a very recent publication by \cite{MradAndSouayah2018} that presents an AF formulation for the problem of minimizing makespan on identical parallel machines.} In our work, we first propose a straight AF formulation, and then enhance it through a set of techniques that aim at reducing the number of variables and constraints by combining established \rev{reduction procedures} from the literature with some specific features of the $P||\sum w_j C_j$. This results in a powerful method that solves to proven optimality {large instances and provide low optimality gaps for very large instances.}

The remainder of this paper is organized as follows. In the next Section \ref{sec:litreview}, we review the main literature, whereas in Section \ref{sec:mathform} we adapt to the $P||\sum w_j C_j$ some mathematical formulations {from the literature}. In Section \ref{sec:AF}, we present the straight and enhanced AF formulations. In Section \ref{sec:results}, we provide the outcome of extensive computational experiments and finally, in Section \ref{sec:conclusion}, we present some concluding remarks.

\section{Literature review} \label{sec:litreview}

{The $P||\sum w_jC_j$ is a generalization of  $1||\sum C_j$,  $1||\sum w_jC_j$ and  $P||\sum C_j$, which are all solvable in polynomial time using the well known \emph{shortest processing time} (SPT) rule, or the {\em weighted shortest processing time} (WSPT) rule of \cite{Smith1956}. The WSPT rule sorts jobs according to non-increasing $w_j/p_j$. The $P||\sum w_jC_j$, on the contrary, is a difficult problem, and was proven to be $\mathcal{NP}$-hard even with just two machines ($P2||\sum w_jC_j$, see \citealt{Bruno1974}).}

The literature on the $P||\sum w_jC_j$ focused on the development of early heuristic methods and exact {\em branch-and-bound} (B\&B) algorithms. \cite{Eastman1964} proposed a heuristic that uses a variant of the WSPT rule. \cite{Kawaguchi1986} showed that such heuristic guarantees a solution whose total weighted completion time is not worse than $(\sqrt{2}+1)/2$ times the optimal solution value. \cite{ElmaghrabyPark1974} proposed a B\&B based on the use of lower bounds and properties of optimal solutions.
\rev{\cite{Sarinetal1988} improved the work by \cite{ElmaghrabyPark1974}, by proposing a new branching scheme that substantially reduces the number of schedules to be evaluated in the  B\&B tree.} They solved instances with up to 30 jobs and 5 machines. \cite{BelouadahandPotts1994} incorporated Lagrangian relaxation in a B\&B based on a TI formulation, solving instances with up to 30 jobs and 8 machines.
Another B\&B was designed by \cite{AzizogluandKirca19999b}, who used the same branching scheme of \cite{Sarinetal1988}, but enriched it with the lower bound of \cite{Webster1995}, solving instances with up to $35$ jobs and $5$ machines.

\cite{ChenandPowell1999}  tackled the $P||\sum w_j C_j$, the $Q||\sum w_j C_j$ and the $R||\sum w_j C_j$ by means of a \emph{set covering} (SC) formulation where each column corresponds to a single machine schedule. To deal with the large number of feasible schedules, they developed a {\em branch-and-price} (B\&P) method in which at each node of an enumeration tree a valid lower bound was obtained by column generation. In the same year, \cite{vandenAkkeretal1999} independently developed a similar B\&P. They focused only on the $P||\sum w_j C_j$, and obtained slightly better results than \cite{ChenandPowell1999} by branching on completion times instead of branching on variables that indicate whether a certain job is processed immediately after another job. Their B\&P solved $P||\sum w_j C_j$ instances with up to 100 jobs and 10 machines. \rev{Very recently, \cite{KowalczykandLeus2018} extended the method of \cite{vandenAkkeretal1999} by investigating the use of stabilization techniques, a generic branching rule (see \citealt{RyanandFoster1981}), and a {zero-suppressed binary decision diagram} approach (see \citealt{Minato1993}) for solving the pricing subproblem.}

\cite{SousaWolsey1992} proposed a TI formulation to solve single machine scheduling problems with general objective function. Their formulation originates from early works on scheduling (see, e.g., \citealt{Bowman1959} and \citealt{PritskerEtAl1969}) and is easily adaptable to multiple machine problems.

Recent literature focused on the related $R||\sum w_jC_j$, which considers unrelated parallel machines. \cite{Skutella2001} proposed a {\emph{convex integer quadratic programming} (CIQP)} relaxation as the basis for an approximation algorithm. In his approach, the problem is  formulated as an integer quadratic programming model with $n \times m$ assignment variables. Then, integrality is relaxed and the objective function is convexified to obtain a CIQP relaxation that can be solved in polynomial time. Finally, the relaxed solution is transformed in a feasible solution by applying a randomized rounding method. Later on, \cite{PlateauandRiosSolis2010} embedded the CIQP relaxation  by \cite{Skutella2001} in a branch-and-bound algorithm to obtain an exact approach.
\cite{BulbulandSen2017} proposed a Benders decomposition method based on a TI formulation. Their formulation accepts preemptive solutions, but is proven to yield non-preemptive optimal solutions because of the use of tailored coefficients in the objective function. Their method obtained better results than those produced by the CIQP formulation.
We also highlight the recent works by \cite{Rodriguezetal2012, Rodriguezetal2013}, who developed metaheuristic algorithms based on {GRASP} and iterated greedy paradigms, \rev{and tested them on instances having either unrelated or uniform machines. Note that we cannot compare with the literature on these instances, as our methods are specifically tailored for the case of identical machines.}

We conclude this section by referring the interested reader to the review by \cite{LiandYang2009} on models, relaxations and algorithms for minimizing weighted completion times on parallel machines. Another review of models for parallel machines scheduling problems, which includes a computational evaluation of MILP models, was proposed by \cite{UnluandMason2010}. They classified the formulations into four different types according to the characteristics of their variables (TI variables, network variables, assignment variables, and positional date variables), and concluded that TI formulations tend to perform better than the others. \rev{A very recent review on preemptive models for scheduling problems with controllable processing times has been presented by \cite{SSS18}, who also included a section on methods based on flow computations.}

\section{Existing mathematical formulations}
\label{sec:mathform}

In this section, we provide an overview of mathematical formulations for the $P||\sum w_jC_j$ that we obtained by adapting models originally presented for related problems.

\subsection{Sousa and Wolsey's time indexed formulation}\label{TI}

The TI formulation by \cite{SousaWolsey1992} was originally designed to deal with the problem of sequencing jobs over the time on a single machine subject to resource constraints. As previously highlighted, this formulation can be modified to consider parallel machines and weighted completion time as follows:
\begin{align}
(\mbox{TI}) \quad \min {\sum_{j \in J} \sum_{t = 0}^{T-p_j} w_j t x_{jt}} + \sum_{j \in J} w_j p_j \label{FO:TI}\\
\text{st.} \sum_{t = 0}^{T - p_j} x_{jt} = 1 & &&  j \in J  \label{constr1:TI}\\
\sum_{j \in J} { \sum_{s = \max\{0, t+1-p_j\}}^{\min\{t,T+1-p_j\}} } x_{i s} \leq m & &&  t = 0,\dots,T-1 \label{constr2:TI}\\
x_{jt} \in \{0,1\} & &&  j \in J,   t = 0, \dots, T-p_j \label{constr3:TI}
\end{align}
where $x_{jt}$ is a binary decision variable taking value $1$ if job $j$ starts its processing at time $t$, $0$ otherwise. The time horizon is defined by $T$ and should be sufficiently large to ensure optimality {and as short as possible to avoid the creation of unnecessary variables}. The objective function \eqref{FO:TI} seeks the minimization of the total weighted completion time. Note that we expressed the completion time of a job as the sum of starting time and processing time, formally using $C_j = \sum_t t x_{jt} + p_j$. The term $\sum_j w_jp_j$ is a constant and is thus irrelevant for the formulation. Constraints \eqref{constr1:TI} ensure that each job is processed exactly once. Constraints \eqref{constr2:TI} forbid overlapping among the jobs {by imposing that at most $m$ jobs are executed in parallel at any time. Constraints} \eqref{constr3:TI} define the variables' domain. Model \eqref{FO:TI}--\eqref{constr3:TI} contains a pseudo-polynomial number of variables, a common characteristic of TI formulations, which amounts to $\mathcal{O}(nT)$.

\subsection{Skutella's convex integer quadratic programming formulation} \label{sec:CQP}

The idea behind the method of \cite{Skutella2001} is to formulate the $R||\sum w_jC_j$ as an integer quadratic program and then convexify the objective function.
His formulation uses $n \times m$ integer assignment variables and can be adapted to  the $P||\sum w_jC_j$ as follows:
\begin{align}
(\mbox{CIQP}) \quad \min \sum_{j\in J} w_j C_j \label{FO:CQP}\\
\text{st.} \sum_{k \in M} x_{j}^{k} = 1 & &&  j \in J \label{constr1:CQP}\\
C_j = \sum_{k \in M} x_{j}^{k} \left( \frac{1+x_{j}^{k}}{2}p_j + \sum_{i \in J, i \prec j}x_{i}^{k}p_i \right) & &&  j \in J \label{constr2:CQP}\\
x_{j}^{k} \in \{0,1\} & &&  j \in J,  k \in M\label{constr3:CQP}
\end{align}
where $x_j^k = 1$ if job $j$ is scheduled on machine $k$, $0$ otherwise. The notation $i \prec j$ in \eqref{constr2:CQP} means that either $(w_i/p_i > w_j/p_j)$ or $(w_i/p_i = w_j/p_j$ and $i < j)$. This is used to take into account that the jobs are scheduled on each machine by non-increasing order of $w_j/p_j$, i.e., by following the WSPT rule. The relaxation obtained by dropping integrality constraints from \eqref{constr3:CQP} can be solved easily. Indeed, \cite{Skutella2001} showed that $x_j^k = 1/m$ $\forall j \in J$ and $k \in M$ is an optimal solution to this relaxation when the machines are identical. The optimal integer solution {of model \eqref{FO:CQP}--\eqref{constr3:CQP}} can be obtained by invoking a commercial CIQP solver {such as CPLEX or Gurobi (as done by \citealt{PlateauandRiosSolis2010} for the $R||\sum w_jC_j$).
	
\subsection{B{\"u}lb{\"u}l and {\c{S}}en's preemptive time indexed formulation}\label{sec:PeemptionTI}

\cite{BulbulandSen2017} modeled the preemptive version of the $R||\sum w_jC_j$ by means of a \emph{preemptive time indexed} (PTI) formulation making use of $n \times m \times T$ continuous variables and $n \times m$ binary variables. Then, they proved that it is always possible to devise a non-preemptive solution having the same objective value of the optimal PTI preemptive one, concluding that the PTI model is optimal also for the (non-preemptive) $R||\sum w_jC_j$. Their formulation can  be adapted to the $P||\sum w_jC_j$ as follows:
\begin{align}
(\mbox{PTI}) \quad \min \sum_{j \in J} \sum_{t = 1}^{T} \sum_{k \in M} \frac{w_j}{p_j}\left(t + \frac{p_j - 1}{2}\right) x_{jkt} \label{FO:PTI}\\
\text{s.t. } \sum_{t = 1}^{T} x_{jkt} = p_j y_{jk} & &&  j \in J,  k \in M \label{constr1:PTI}\\
\sum_{j \in J} x_{jkt} \leq 1 & &&  k \in M, t = 1,\dots,T \label{constr2:PTI}\\
\sum_{k \in M} y_{jk} = 1 & &&  j \in J \label{constr3:PTI}\\
x_{jkt} \geq 0 & &&  j \in J,  k \in M, t = 1,\dots,T \label{constr4:PTI}\\
y_{jk} \in \{0,1\} & &&  j \in J,  k \in M \label{constr5:PTI}
\end{align}
where $x_{jkt}$ are continuous variables representing the quantity of job $j$, i.e., the number of unit-length parts of job $j$, which is finished at time $t$ on machine $k$, and $y_{jk}$ takes value $1$ if job $j$ is assigned to machine $k$, $0$ otherwise. Constraints \eqref{constr1:PTI} state that all unit-length parts of job $j$ must be processed on the same machine. Constraints \eqref{constr2:PTI} ensure that each machine processes at most one job at a time. Constraints \eqref{constr3:PTI} guarantee that each job is assigned to exactly one machine. Constraints \eqref{constr4:PTI} and \eqref{constr5:PTI} give the variables' domains.

To tackle the pseudo-polynomial number of variables and constraints in the aforementioned model, the authors proposed a Benders decomposition approach. Their idea is to start by first solving a master problem composed by variables $y$ and constraints \eqref{constr3:PTI} and \eqref{constr5:PTI}, and obtain a solution $\bar{y}$. Then, solving a set of $m$ subproblems, each for a machine, that use the $\bar{y}$ solution but only involve variables $x$ and constraints \eqref{constr1:PTI}, \eqref{constr2:PTI} and \eqref{constr4:PTI}. The subproblems either deliver an optimal $P||\sum w_jC_j$ solution, or some optimality Benders cuts to be added to the master problem. This process is reiterated until proof of optimality or some stopping criteria are met.

\subsection{Set covering formulation} \label{sec:SC}

\emph{Set covering} (SC) formulations are widely used to model combinatorial optimization problems as covering problems. \cite{vandenAkkeretal1999} followed this idea and modeled the $P||\sum w_jC_j$ by using an SC formulation having an exponential number of variables. Let $S$ be a set containing all feasible schedules for a single machine, let $a_{js}$ be a binary coefficient indicating whether job $j \in J$ is included or not in schedule $s$, and let $x_s$ be a binary variable assuming value $1$ if schedule $s \in S$ is selected, $0$ otherwise. The $P||\sum w_jC_j$ can be modeled as:
\begin{align}
(\mbox{SC}) \quad \min \sum_{s \in S} c_s x_s \label{FO:SC}\\
\text{s.t. } \sum_{s \in S} x_s = m  \label{constr1:SC}\\
\sum_{s \in S} a_{js} x_{s} = 1 & &&  j \in J \label{constr2:SC}\\
x_s \in \{0,1\} & &&  s \in S \label{constr3:SC}
\end{align}
Constraints \eqref{constr1:SC} state that exactly $m$ schedules are selected. Constraints \eqref{constr2:SC} ensure that each job is processed once, and constraints \eqref{constr3:SC} impose variables to be binary. As model \eqref{FO:SC}--\eqref{constr3:SC} has an exponential number of variables, the authors solved it with a branch-and-price algorithm. In particular, they solved each node of an enumeration tree by means of a column generation method, which looks for negative cost schedules by invoking a tailored \rev{ {\it dynamic programming} (DP) algorithm}. They performed branching by considering the minimum completion time of a fractional job. Let $\bar x$ be the current solution and $\bar S \subseteq S$ the set of schedules associated with positive $\bar x$ values. A fractional job is defined as a job $j$ for which $\sum_{s \in \bar S} C_j(s) \bar x_s > \min \{C_j(s) | \bar x_s > 0\}$, where $C_j(s)$ is the completion time of job $j$ in schedule $s$. Note that \cite{ChenandPowell1999} also proposed a branch-and-price algorithm to solve model \eqref{FO:SC}--\eqref{constr3:SC}, but, differently from \cite{vandenAkkeretal1999}, they performed branching directly on the $x$ variables.

\section{Arc-flow formulations} \label{sec:AF}
We first present a straight formulation, and then enhance \rev{it} with reduction procedures.

\subsection{Straight arc-flow formulation}
\label{subsec:SAF}

AF formulations are an established combinatorial optimization technique that models problems by using flows on a capacitated network (see, e.g., \citealt{W77}). When applied to machine scheduling problems, \rev{the flow obtained by solving an AF formulation} can be easily decomposed into paths (see \citealt{AMO93}), so that each path corresponds to a schedule of  activities on a machine. AF formulations make use of a pseudo-polynomial number of variables and constraints, and are thus related to the TI formulations that we previously described. The research effort behind AF is, however, to try to reduce as much as possible the required number of variables and constraints, thus keeping the size of the model as small as possible while preserving optimality.
In our work, we follow the recent literature on AF, that, starting from \cite{ValeriodeCarvalho1999}, used these techniques to obtain good computational results on cutting and packing problems  (see, e.g., \citealt{Delorme2016} for an updated survey), but we take into account issues that are typical of the scheduling field.

Our AF formulation models the $P||\sum w_jC_j$ as the problem of finding $m$ independent paths that start from a source node $0$, end at a destination node $T$, and  cover all the jobs. {For the sake of clarity, we start by presenting the very basic model, and focus later (in Algorithm 	\ref{alg:PandA}) on a first reduction of variables and constraints that is based on the WSPT sorting.
Our very basic AF formulation} uses a direct acyclic multigraph $G = (N,A)$. The set of vertices $N \subseteq \{0, 1, \dots, T\}$ can be initially considered as the set of {\em normal patterns} (for which we refer to the seminal papers by  \citealt{H72} and \citealt{CW77}, and to the recent discussion in \citealt{CI16}), i.e., the set of all the feasible combinations of jobs' processing times whose resulting value is between 0 and $T$. Let  $J_{+} = J \cup \{0\}$ include the original set of jobs plus a dummy job $0$ {having $p_0 = 0$ and $w_0 = 0$.} The set of arcs is partitioned as $A = \cup_{j \in J_{+}} A_{j}$. Each $A_{j}$ represents the set of {\em jobs arcs} associated with job $j \in J$, and, {for the moment, let us define it} as $A_j = \{ (q,r,j): {r-q=p_j} \text{ and } q \in N \}$. In addition, $A_0$ represents the set of {\em loss arcs}, that are used to model the amount of idle time between the end of activities and $T$ on a machine, and is defined as $A_0 = \{ (q,T,0): q \in N\}$. Let us also use $\delta^{+}(q) \subseteq A$, respectively $\delta^{-}(q) \subseteq A$, to define the subset of arcs that emanate from, respectively enter, a given node $q \in N$. A feasible $P||\sum w_jC_j$ solution can be represented as a set of $m$ paths in $G$, each corresponding to a machine schedule that start in 0 and make use of jobs arcs and of possibly one last loss arc to reach $T$.

To formulate the $P||\sum w_jC_j$  as an AF, we associate with each job arc $(q,r,j) \in A$ a variable $x_{qrj}$ that has a twofold meaning: for jobs arcs $(q,r,j) \in A_j$, $x_{qrj}$ takes value 1 if job $j$ is scheduled at start time $q$, 0 otherwise; for loss arcs $(q,T,0) \in A_0$, $x_{qT0}$ gives the \rev{number} of paths that end with arc $(q,T,0)$, i.e., that contain activities that finish at time $q$. The $P||\sum w_jC_j$ can then be modeled as:
\begin{align}
(\mbox{AF}) \quad \min \sum_{(q,r,j) \in A} w_j q x_{qrj} + \sum_{j \in J} w_j p_j \label{FO:AF}\\
\sum_{(q,r,j) \in \delta^{+}(q)} x_{qrj} - \sum_{(p,q,j) \in \delta^{-}(q)} x_{pqj} = \left\{
\begin{array}{l l}
m, & \quad \text{ if $q = 0$}\\
-m, & \quad \text{ if $q = T$}\\
0, & \quad \text{otherwise}
\end{array}\right. & &&  q \in N\label{constr1:AF}\\
\sum_{(q,r,j) \in A} x_{qrj} \geq 1 & &&  j \in J \label{constr2:AF}\\
x_{qrj} \in \{0,1\} & &&  (q,r,j) \in A \setminus A_0 \label{constr3:AF}\\
0 \leq x_{qT0} \leq m & && { (q,T,0) \in A_0} \label{constr4:AF}
\end{align}
The objective function \eqref{FO:AF} minimizes the sum of the weighted completion times. Constraints \eqref{constr1:AF} impose both flow conservation at each node and the use of exactly $m$ paths. Constraints \eqref{constr2:AF} impose all jobs to be scheduled, whereas constraints \eqref{constr3:AF} and \eqref{constr4:AF} give the variables' domains. Note that variables associated to loss arcs do not need to be defined as integers.

A first, simple but very important rule can be used to decrease the number of {variables and constraints in the model}. As previously discussed, in any optimal solution of the $P||\sum w_jC_j$ the jobs are sequenced on each machine by following the WSPT rule. Consequently, only arcs fulfilling this sorting rule can be considered (the first job in the order can only start in 0, the second job can start in 0 or right after the first job, and so on). The procedure that we implemented to build the underlying AF multi-graph, given in Algorithm \ref{alg:PandA}, takes this fact into consideration, producing a first reduction {of the size of sets $N$ and $A$, and hence of the number of variables and constraints in the formulation}. The procedure initializes the set of nodes and arcs to the empty set. It then considers one job $j$ at a time, according to the WSPT rule, to create the $A_j$ sets at steps 4--7. The sets of nodes and of loss arcs are constructed at steps 8--9, and the overall set of arcs at step 10.

\begin{algorithm}[htb]
	\caption{Construction of the AF multi-graph}\label{alg:PandA}
	\begin{algorithmic}[1]
		\Procedure {Create\_Patterns\_and\_Arcs}{$T$}
		\State {\bf{initialize}} $P[0 \dots T] \gets 0;$ {\footnotesize {\color{gray} \Comment $P$: array of size $T+1$}}
		\State {\bf{initialize}} $N \gets \emptyset; A[0 \dots n] \gets \emptyset;$ {\footnotesize {\color{gray} \Comment $N$: set of {vertices}; $A$: set of arcs}}
		\State $P[0] \gets 1;$
		\For {$j \in J$ according to the WSPT rule}
		\For {$t \gets T - p_j$ down to $0$}
		\If {$P[t] = 1$}
		$P[t + p_j] \gets 1; \rev{A[j] \gets A[j]}  \cup \{(t, t + p_j,j)\}; $ {\footnotesize {\color{gray} \Comment \rev{$A[j]$}: set of job arcs of $j$}}
		\EndIf
		\EndFor
		\EndFor
		\For {$t \gets 0$ to  $T$}
		\If {$P[t] = 1$} {$N \gets N \cup \{t\}$}; $\rev{A[0] \gets A[0]} \cup \{(t, T, 0)\}$ {\footnotesize {\color{gray} \Comment \rev{$A[0]$}: set of loss arcs}} \EndIf
		\EndFor
		\State $A \gets \cup_{j \in J+} \rev{A[j]}$
		\State \bf{return} $N, A$
		\EndProcedure
	\end{algorithmic}
\end{algorithm}

To ease comprehension, we present a simple example with 4 jobs, 2 machines, and $T = 8$ {(\rev{details on how to compute a strict value of $T$ are given} in Section \ref{sec:method} below)}. The characteristics of the 4 jobs and their sorting according to the WSPT rule are given in Figure \ref{fig:Ex1Alg1a}. The AF multi-graph built by Algorithm \ref{alg:PandA} is given in Figure \ref{fig:Ex1Alg1b}, and contains 9 vertices, 11 job arcs, and 7 loss arcs. An optimal solution is provided in Figure \ref{fig:Ex1Alg1c}, where we highlight the 2 paths corresponding to the machine schedules.

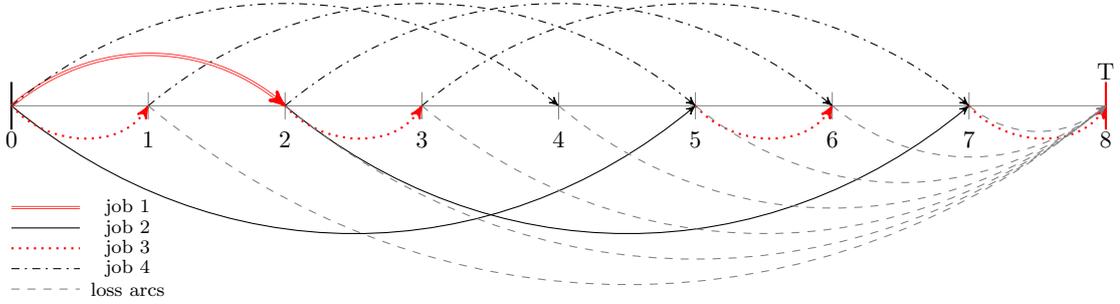
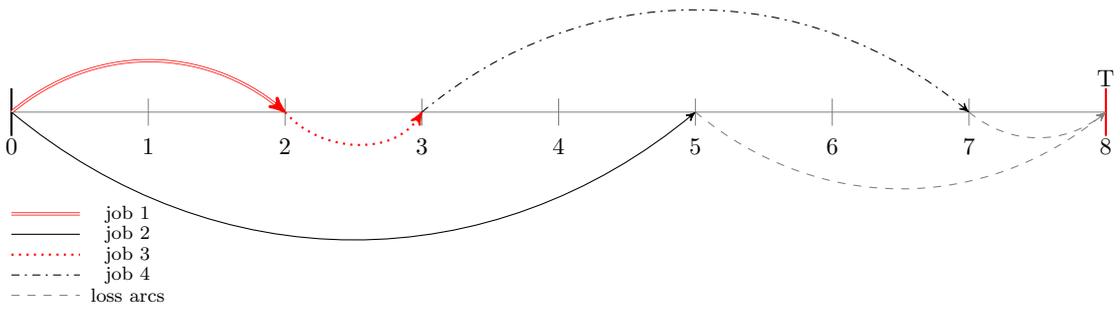
\begin{figure}[htb]
	\centering
	\subfloat[Input data ($T=8$)]{
		\label{fig:Ex1Alg1a}
		\footnotesize
		\begin{tabular}{ccccH}
			\toprule
			$j$ & $p_{j}$ & $w_{j}$ & $w_{j}/p_{j}$ &{WSPT rank} \\
			\cmidrule(r){1-1}
			\cmidrule(r){2-5}
			1 & 2 & 4 & 2.00 & 1\\
			2 & 5 & 7 & 1.40 & 2\\
			3 & 1 & 1 & 1.00 & 3\\
			4 & 4 & 3 & 0.75 & 4\\
			\bottomrule
		\end{tabular}
	}\\
	\subfloat[AF formulation]{
		\label{fig:Ex1Alg1b}
		\centering
		\begin{tikzpicture}[thick, scale=0.90, every node/.style={transform shape}, >=stealth', dot/.style = {draw, fill = white, circle, inner sep = 0pt, minimum size = 4pt}]
		\draw[very thin, color=black!50] (0,0) -- (16,0);
		\node (nT) at (16,0.5){T};\draw[color=red] (16,0.35) -- (16,-0.35);
		\node (n0) at (0,-0.5){$0$}; \draw[very thin, color=black!50] (0,0.2) -- (0,-0.2);\draw[color=black] (0,0.35) -- (0,-0.35);
		\node (n1) at (2,-0.5){$1$};\draw[very thin, color=black!50] (2,0.2) -- (2,-0.2);
		\node (n2) at (4,-0.5){$2$};\draw[very thin, color=black!50] (4,0.2) -- (4,-0.2);
		\node (n3) at (6,-0.5){$3$};\draw[very thin, color=black!50] (6,0.2) -- (6,-0.2);
		\node (n4) at (8,-0.5){$4$};\draw[very thin, color=black!50] (8,0.2) -- (8,-0.2);
		\node (n5) at (10,-0.5){$5$};\draw[very thin, color=black!50] (10,0.2) -- (10,-0.2);
		\node (n6) at (12,-0.5){$6$};\draw[very thin, color=black!50] (12,0.2) -- (12,-0.2);
		\node (n7) at (14,-0.5){$7$};\draw[very thin, color=black!50] (14,0.2) -- (14,-0.2);
		\node (n8) at (16,-0.5){$8$};
		\draw[red, double, line width=0.1mm, ->] (0,0) to [out=40,in=140] (4,0);
		\draw[black, thin, ->] (0,0) to [out=-40,in=-140] (10,0);
		\draw[black, thin, ->] (4,0) to [out=-40,in=-140] (14,0);
		\draw[red, dotted, ->] (0,0) to [out=-50,in=-120] (2,0);
		\draw[red, dotted, ->] (4,0) to [out=-50,in=-120] (6,0);
		\draw[red, dotted, ->] (10,0) to [out=-50,in=-120] (12,0);
		\draw[red, dotted, ->] (14,0) to [out=-50,in=-120] (16,0);
		\draw[black, dashdotted, thin, ->] (0,0) to [out=40,in=140] (8,0);
		\draw[black, dashdotted, thin, ->] (2,0) to [out=40,in=140] (10,0);
		\draw[black, dashdotted, thin, ->] (4,0) to [out=40,in=140] (12,0);
		\draw[black, dashdotted, thin, ->] (6,0) to [out=40,in=140] (14,0);
		\draw[gray, dashed, thin, ->] (2,0) to [out=-40,in=-140] (16,0);
		\draw[gray, dashed, thin, ->] (4,0) to [out=-40,in=-140] (16,0);
		\draw[gray, dashed, thin, ->] (6,0) to [out=-40,in=-140] (16,0);
		\draw[gray, dashed, thin, ->] (8,0) to [out=-40,in=-140] (16,0);
		\draw[gray, dashed, thin, ->] (10,0) to [out=-40,in=-140] (16,0);
		\draw[gray, dashed, thin, ->] (12,0) to [out=-40,in=-140] (16,0);
		\draw[gray, dashed, thin, ->] (14,0) to [out=-40,in=-140] (16,0);
		\draw[red, double, line width=0.1mm, -](0,-1.5) to (1,-1.5);\node (j2) at (1.7,-1.5){\footnotesize{job 1}};
		\draw[black, thin, -](0,-1.8) to (1,-1.8);\node (j1) at (1.7,-1.8){\footnotesize{job 2}};
		\draw[red, dotted, -](0,-2.1) to (1,-2.1);\node (j3) at (1.7,-2.1){\footnotesize{job 3}};
		\draw[black, dashdotted, thin, -](0,-2.4) to (1,-2.4);\node (j4) at (1.7,-2.4){\footnotesize{job 4}};
		\draw[gray, dashed, thin, -](0,-2.7) to (1,-2.7);\node (j0) at (1.7,-2.7){\footnotesize{loss arcs}};		
		\end{tikzpicture}
	}\\
	\subfloat[Optimal solution value = 67]{
		\label{fig:Ex1Alg1c}
		\centering
		\begin{tikzpicture}[thick, scale=0.90, every node/.style={transform shape}, >=stealth', dot/.style = {draw, fill = white, circle, inner sep = 0pt, minimum size = 4pt}]
		\draw[very thin, color=black!50] (0,0) -- (16,0);
		\node (nT) at (16,0.5){T};\draw[color=red] (16,0.35) -- (16,-0.35);
		\node (n0) at (0,-0.5){$0$}; \draw[very thin, color=black!50] (0,0.2) -- (0,-0.2);\draw[color=black] (0,0.35) -- (0,-0.35);
		\node (n1) at (2,-0.5){$1$};\draw[very thin, color=black!50] (2,0.2) -- (2,-0.2);
		\node (n2) at (4,-0.5){$2$};\draw[very thin, color=black!50] (4,0.2) -- (4,-0.2);
		\node (n3) at (6,-0.5){$3$};\draw[very thin, color=black!50] (6,0.2) -- (6,-0.2);
		\node (n4) at (8,-0.5){$4$};\draw[very thin, color=black!50] (8,0.2) -- (8,-0.2);
		\node (n5) at (10,-0.5){$5$};\draw[very thin, color=black!50] (10,0.2) -- (10,-0.2);
		\node (n6) at (12,-0.5){$6$};\draw[very thin, color=black!50] (12,0.2) -- (12,-0.2);
		\node (n7) at (14,-0.5){$7$};\draw[very thin, color=black!50] (14,0.2) -- (14,-0.2);
		\node (n8) at (16,-0.5){$8$};
		\draw[red, double, line width=0.1mm, ->] (0,0) to [out=40,in=140] (4,0);
		\draw[black, thin, ->] (0,0) to [out=-40,in=-140] (10,0);
		\draw[red, dotted, ->] (4,0) to [out=-50,in=-120] (6,0);
		\draw[black, dashdotted, thin, ->] (6,0) to [out=40,in=140] (14,0);
		\draw[gray, dashed, thin, ->] (10,0) to [out=-40,in=-140] (16,0);
		\draw[gray, dashed, thin, ->] (14,0) to [out=-40,in=-140] (16,0);
		\draw[red, double, line width=0.1mm, -](0,-1.5) to (1,-1.5);\node (j2) at (1.7,-1.5){\footnotesize{job 1}};
		\draw[black, thin, -](0,-1.8) to (1,-1.8);\node (j1) at (1.7,-1.8){\footnotesize{job 2}};
		\draw[red, dotted, -](0,-2.1) to (1,-2.1);\node (j3) at (1.7,-2.1){\footnotesize{job 3}};
		\draw[black, dashdotted, thin, -](0,-2.4) to (1,-2.4);\node (j4) at (1.7,-2.4){\footnotesize{job 4}};
		\draw[gray, dashed, thin, -](0,-2.7) to (1,-2.7);\node (j0) at (1.7,-2.7){\footnotesize{loss arcs}};			
		\end{tikzpicture}
	}
	\caption{Example of AF formulation}
	\label{fig:Ex1Alg1}
\end{figure}

\rev{We now notice a relevant property of the AF formulation.		
	\begin{proposition}
		The AF formulation \eqref{FO:AF}--\eqref{constr4:AF} is equivalent to the TI formulation \eqref{FO:TI}--\eqref{constr3:TI}.
	\end{proposition}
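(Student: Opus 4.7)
The plan is to exhibit mutually inverse, cost-preserving correspondences between the integer feasible solutions of interest in the two models, and then conclude that the optimal values coincide. The natural bijection at the level of variables is obvious: a job arc $(q,q+p_j,j)\in A_j$ of the AF multi-graph represents the same event as the TI variable $x_{j,q}$, namely ``job $j$ starts at time $q$''; loss arcs carry no job and merely record that a machine is idle from $q$ until $T$.

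First I would prove the direction $\mathrm{AF}\Rightarrow\mathrm{TI}$. Starting from any integer optimal solution of \eqref{FO:AF}--\eqref{constr4:AF}, I would apply the standard flow-decomposition theorem for integer flows (see \citealt{AMO93}) to write the flow as $m$ unit paths from $0$ to $T$, each of which is the schedule of a single machine. I define $x^{\mathrm{TI}}_{j,q}=1$ exactly when some path uses the job arc $(q,q+p_j,j)$. Because the objective coefficients $w_j$ are non-negative, \eqref{constr2:AF} is tight at optimality, so \eqref{constr1:TI} holds; \eqref{constr2:TI} holds because at any integer instant $t$ at most $m$ paths can be traversing a job arc simultaneously, which is precisely the capacity constraint on the $m$ machines; and the two objectives coincide term by term, since $q$ is the start time of the job placed on arc $(q,q+p_j,j)$ and $C_j=q+p_j$.

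Next I would prove the direction $\mathrm{TI}\Rightarrow\mathrm{AF}$. Given an optimal TI solution, I invoke two classical properties of $P\,||\sum w_jC_j$ with non-negative weights: (i) a left-shift argument produces an optimal schedule with no machine idle time between consecutive jobs; (ii) a pairwise-exchange argument orders the jobs of each machine by non-increasing $w_j/p_j$, i.e.\ in WSPT order. In such a solution every start time on every machine is a sum of processing times, hence a normal pattern, and every arc used lies in the multi-graph returned by Algorithm~\ref{alg:PandA}. I then set $x^{\mathrm{AF}}_{q,q+p_j,j}=1$ for each used job arc and let $x^{\mathrm{AF}}_{q,T,0}$ count the machines finishing at time $q$. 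Flow conservation \eqref{constr1:AF} follows from the consecutive-arc structure of each path, \eqref{constr2:AF} is tight because each job is scheduled exactly once, and the integrality bounds \eqref{constr3:AF}--\eqref{constr4:AF} are automatic; the objective value is unchanged.

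The main obstacle is justifying that the restriction of the AF multi-graph to normal-pattern nodes and WSPT-compatible arcs, which is encoded in Algorithm~\ref{alg:PandA}, does not exclude the optimum. This is precisely what the two classical properties (i) and (ii) provide, and I would invoke them by reference rather than re-prove them, emphasising that (i) holds because weights are non-negative and (ii) because swapping two adjacent out-of-WSPT jobs on a single machine weakly decreases $\sum w_jC_j$. Putting the two directions together, any feasible solution of one model is matched by a feasible solution of the other with the same cost, so the optimal values are equal and the formulations are equivalent.
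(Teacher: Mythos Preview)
Your argument is essentially correct for establishing that the two formulations have the same optimal integer value, but it takes a genuinely different route from the paper and proves a somewhat weaker statement.

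The paper's proof is linear-algebraic rather than combinatorial. It works with the \emph{basic} AF formulation (node set $N=\{0,\dots,T\}$, no normal-pattern or WSPT reduction) and shows that the flow-conservation constraints \eqref{constr1:AF} are related to the capacity constraints \eqref{constr2:TI} by a unimodular transformation: replacing the $t$-th flow-conservation row by the sum of rows $0,\dots,t$ turns the AF matrix into the TI matrix, with the loss-arc variables appearing exactly as the slacks of the TI inequalities. Together with the obvious identification of job-arc variables $x_{q,q+p_j,j}$ with TI variables $x_{j,q}$, this shows that the two systems describe the \emph{same polyhedron} (hence also the same LP relaxation), which is the standard meaning of ``equivalent formulations'' in this literature (cf.\ \citealt{ValeriodeCarvalho2002}).

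Your approach instead maps optimal \emph{integer} solutions back and forth: flow decomposition for AF$\Rightarrow$TI, and left-shift plus WSPT for TI$\Rightarrow$AF. This is perfectly valid for concluding that the optimal values coincide, and it has the pedagogical advantage of being constructive and self-contained. However, it does not show that the LP relaxations coincide, and it entangles the basic equivalence with the WSPT/normal-pattern reductions of Algorithm~\ref{alg:PandA}. The paper deliberately separates these two issues: the proposition is proved for the unreduced AF, and the reductions are justified independently as optimality-preserving preprocessing. If you want to match the paper's statement in full strength, you would need to add the unimodular-transformation argument (or an equivalent polyhedral argument) rather than rely solely on the bijection of optimal integer solutions.
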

	\begin{proof}	
		Our proof uses arguments similar to that adopted in \cite{ValeriodeCarvalho2002} for the proof of his Proposition 5.1 (which shows the equivalence of the AF and TI formulations for the cutting stock problem). To simplify the reasoning, we consider a basic AF formulation in which the vertices set $N$  contains all nodes from $0$ to $T$ (i.e., we do not apply the reduction due to the normal patterns) and the arc set contains all the job and loss arcs. Moreover,
		we substitute the `$\geq$' sign in \eqref{constr2:AF} with the `$=$' sign, without loosing optimal solutions because any solution selecting more than one arc for the same job can be improved by choosing only one of these arcs.
		
		Remind that the three indices of the $x$ variables in AF are introduced to simplify the writing of the model, but only two indices are necessary. Indeed, job arc $(q,r,j)$ is introduced only when $r = q+p_j$, and hence $x_{qrj}$ is set to one, in AF, if job  $j$ starts at time $q$, as $x_{jt}$  is set to one, in TI, when job $j$ starts at time $t$. Using this observation one can see that the objective functions \eqref{FO:AF} and \eqref{FO:TI} are equivalent.
		
		Both constraints \eqref{constr2:AF} in AF (with the `$=$' sign) and constraints \eqref{constr1:TI} in TI impose that a single starting time for each job $j$ is chosen.
		
		We conclude the proof by showing that the remaining constraints in AF and TI are equivalent, by using an unimodular transformation, similar to the one in \cite{ValeriodeCarvalho2002}.
		To better understand this transformation we refer to the example of Figure \ref{fig:Ex1Alg1}, having two machines, 
		four jobs and an upper bound $T=8$ for the completion of all jobs.
		In Figure \ref{fig:UT_ti} we report the constraint matrix of constraints \eqref{constr2:TI} in TI, while in Figure \ref{fig:UT_af} we report the constraint matrix of the flow conservation constraints \eqref{constr1:AF} in AF (note that in the TI matrix we added an empty row corresponding to $t=T$, for easier comparison with the AF matrix). In the TI matrix we have a column for each starting time of each job.
		In the AF formulation the first part of the matrix refers to the job arcs, while the last part report on the loss arcs used to model the empty space in the machines. In the first part of the two matrices there is a column by column  correspondence: variable $x_{qrj}$ in AF defines a possible starting time of job $j$ at time $q$, exactly as variable $x_{jq}$ do in TI.
		
		Consider the AF constraint matrix and, for each $t=0,\dots, T$, let us substitute the $t$-th flow conservation constraint with the sum of the first $t+1$ constraints \eqref{constr1:AF}, thus obtaining an equivalent constraint matrix. The resulting matrix is depicted in Figure \ref{fig:UT_afut}.
		
		One can note that the first part of the new constraint matrix is identical to the first part of the TI matrix. In the second part we have the same r.h.s. as in TI, but equal sign instead of `$\leq$' sign. However,  each constraint $t$ is completed with the sum of the first $t+1$ loss arcs:  $\sum_{q=0}^t x_{qT0}$, thus giving at least one slack variable available for each constraint.  It follows that the new matrix is equivalent to the TI matrix, which conclude the proof.
\end{proof}}							

\begin{figure}[htbp]
	\centering
	\scriptsize
	\setlength{\tabcolsep}{0.5mm}
	\subfloat[Time-indexed formulation]{
		\label{fig:UT_ti}
		\rev{
			\begin{tabular}{r|p{2.5mm}p{2.5mm}p{2.5mm}p{2.5mm}p{2.5mm}p{2.5mm}p{2.5mm}p{2.5mm}p{2.5mm}p{2.5mm}p{2.5mm}p{2.5mm}p{2.5mm}p{2.5mm}p{2.5mm}p{2.5mm}p{2.5mm}p{2.5mm}p{2.5mm}p{2.5mm}p{2.5mm}p{2.5mm}p{2.5mm}p{2.5mm}p{2.5mm}p{2.5mm}p{2.5mm}p{2.5mm}p{2.5mm}p{2.5mm}p{2.5mm}p{2.5mm}|p{2.5mm}p{2.5mm}p{2.5mm}p{2.5mm}p{2.5mm}p{2.5mm}p{2.5mm}p{2.5mm}p{2.5mm}p{2.5mm}}
				\toprule
				\multicolumn{1}{l|}{$t$}      & \multicolumn{1}{l}{\begin{sideways}$x_{10}$\end{sideways}} & \multicolumn{1}{l}{\begin{sideways}$x_{2 0}$\end{sideways}} & \multicolumn{1}{l}{\begin{sideways}$x_{30}$\end{sideways}} & \multicolumn{1}{l}{\begin{sideways}$x_{40}$\end{sideways}} & \multicolumn{1}{l}{\begin{sideways}$x_{11}$\end{sideways}} & \multicolumn{1}{l}{\begin{sideways}$x_{21}$\end{sideways}} & \multicolumn{1}{l}{\begin{sideways}$x_{31}$\end{sideways}} & \multicolumn{1}{l}{\begin{sideways}$x_{41}$\end{sideways}} & \multicolumn{1}{l}{\begin{sideways}$x_{12}$\end{sideways}} & \multicolumn{1}{l}{\begin{sideways}$x_{22}$\end{sideways}} & \multicolumn{1}{l}{\begin{sideways}$x_{32}$\end{sideways}} & \multicolumn{1}{l}{\begin{sideways}$x_{42}$\end{sideways}} & \multicolumn{1}{l}{\begin{sideways}$x_{13}$\end{sideways}} & \multicolumn{1}{l}{\begin{sideways}$x_{23}$\end{sideways}} & \multicolumn{1}{l}{\begin{sideways}$x_{33}$\end{sideways}} & \multicolumn{1}{l}{\begin{sideways}$x_{43}$\end{sideways}} & \multicolumn{1}{l}{\begin{sideways}$x_{14}$\end{sideways}} & \multicolumn{1}{l}{\begin{sideways}$x_{24}$\end{sideways}} & \multicolumn{1}{l}{\begin{sideways}$x_{34}$\end{sideways}} & \multicolumn{1}{l}{\begin{sideways}$x_{44}$\end{sideways}} & \multicolumn{1}{l}{\begin{sideways}$x_{15}$\end{sideways}} & \multicolumn{1}{l}{\begin{sideways}$x_{25}$\end{sideways}} & \multicolumn{1}{l}{\begin{sideways}$x_{35}$\end{sideways}} & \multicolumn{1}{l}{\begin{sideways}$x_{45}$\end{sideways}} & \multicolumn{1}{l}{\begin{sideways}$x_{16}$\end{sideways}} & \multicolumn{1}{l}{\begin{sideways}$x_{26}$\end{sideways}} & \multicolumn{1}{l}{\begin{sideways}$x_{36}$\end{sideways}} & \multicolumn{1}{l}{\begin{sideways}$x_{46}$\end{sideways}} & \multicolumn{1}{l}{\begin{sideways}$x_{17}$\end{sideways}} & \multicolumn{1}{l}{\begin{sideways}$x_{27}$\end{sideways}} & \multicolumn{1}{l}{\begin{sideways}$x_{37}$\end{sideways}} & \multicolumn{1}{l|}{\begin{sideways}$x_{47}$\end{sideways}} &       &       &       &       &       &       &       &       &       &  \multicolumn{1}{l}{\begin{sideways}\end{sideways}}  \\
				\midrule
				0     & \multicolumn{1}{r}{1}     & \multicolumn{1}{r}{1}     & \multicolumn{1}{r}{1}     & \multicolumn{1}{r}{1}     &       &       &       &       &       &       &       &       &       &       &       &       &       &       &       &       &       &       &       &       &       &       &       &       &       &       &       &       &       &       &       &       &       &       &       &       & $\leq$    & \multicolumn{1}{r}{2} \\
				1     & \multicolumn{1}{r}{1}     & \multicolumn{1}{r}{1}     &       & \multicolumn{1}{r}{1}     & \multicolumn{1}{r}{1}     & \multicolumn{1}{r}{1}     & \multicolumn{1}{r}{1}     & \multicolumn{1}{r}{1}     &       &       &       &       &       &       &       &       &       &       &       &       &       &       &       &       &       &       &       &       &       &       &       &       &       &       &       &       &       &       &       &       & $\leq$    & \multicolumn{1}{r}{2} \\
				2     &       & \multicolumn{1}{r}{1}     &       & \multicolumn{1}{r}{1}     & \multicolumn{1}{r}{1}     & \multicolumn{1}{r}{1}     &       & \multicolumn{1}{r}{1}     & \multicolumn{1}{r}{1}     & \multicolumn{1}{r}{1}     & \multicolumn{1}{r}{1}     & \multicolumn{1}{r}{1}     &       &       &       &       &       &       &       &       &       &       &       &       &       &       &       &       &       &       &       &       &       &       &       &       &       &       &       &       & $\leq$    & \multicolumn{1}{r}{2} \\
				3     &       & \multicolumn{1}{r}{1}     &       & \multicolumn{1}{r}{1}     &       & \multicolumn{1}{r}{1}     &       & \multicolumn{1}{r}{1}     & \multicolumn{1}{r}{1}     & \multicolumn{1}{r}{1}     &       & \multicolumn{1}{r}{1}     & \multicolumn{1}{r}{1}     & \multicolumn{1}{r}{1}     & \multicolumn{1}{r}{1}     & \multicolumn{1}{r}{1}     &       &       &       &       &       &       &       &       &       &       &       &       &       &       &       &       &       &       &       &       &       &       &       &       & $\leq$    & \multicolumn{1}{r}{2} \\
				4     &       & \multicolumn{1}{r}{1}     &       &       &       & \multicolumn{1}{r}{1}     &       & \multicolumn{1}{r}{1}     &       & \multicolumn{1}{r}{1}     &       & \multicolumn{1}{r}{1}     & \multicolumn{1}{r}{1}     & \multicolumn{1}{r}{1}     &       & \multicolumn{1}{r}{1}     & \multicolumn{1}{r}{1}     &       & \multicolumn{1}{r}{1}     & \multicolumn{1}{r}{1}     &       &       &       &       &       &       &       &       &       &       &       &       &       &       &       &       &       &       &       &       & $\leq$    & \multicolumn{1}{r}{2} \\
				5     &       &       &       &       &       & \multicolumn{1}{r}{1}     &       &       &       & \multicolumn{1}{r}{1}     &       & \multicolumn{1}{r}{1}     &       & \multicolumn{1}{r}{1}     &       & \multicolumn{1}{r}{1}     & \multicolumn{1}{r}{1}     &       &       & \multicolumn{1}{r}{1}     & \multicolumn{1}{r}{1}     &       & \multicolumn{1}{r}{1}     &       &       &       &       &       &       &       &       &       &       &       &       &       &       &       &       &       & $\leq$    & \multicolumn{1}{r}{2} \\
				6     &       &       &       &       &       &       &       &       &       & \multicolumn{1}{r}{1}     &       &       &       & \multicolumn{1}{r}{1}     &       & \multicolumn{1}{r}{1}     &       &       &       & \multicolumn{1}{r}{1}     & \multicolumn{1}{r}{1}     &       &       &       & \multicolumn{1}{r}{1}     &       & \multicolumn{1}{r}{1}     &       &       &       &       &       &       &       &       &       &       &       &       &       & $\leq$    & \multicolumn{1}{r}{2} \\
				7     &       &       &       &       &       &       &       &       &       &       &       &       &       & \multicolumn{1}{r}{1}     &       &       &       &       &       & \multicolumn{1}{r}{1}     &       &       &       &       & \multicolumn{1}{r}{1}     &       &       &       &       &       & \multicolumn{1}{r}{1}     &       &       &       &       &       &       &       &       &       & $\leq$    & \multicolumn{1}{r}{2} \\
				8     &       &       &       &       &       &       &       &       &       &       &       &       &       &       &       &       &       &       &       &       &       &       &       &       &       &       &       &       &       &       &       &       &       &       &       &       &       &       &       &       & $\leq$    & \multicolumn{1}{r}{0} \\
				\bottomrule
				\multicolumn{43}{l}{}\\	
			\end{tabular}
		}
	}\\
	\subfloat[Arc-flow formulation]{
		\label{fig:UT_af}
		\rev{
			\begin{tabular}{r|p{2.5mm}p{2.5mm}p{2.5mm}p{2.5mm}p{2.5mm}p{2.5mm}p{2.5mm}p{2.5mm}p{2.5mm}p{2.5mm}p{2.5mm}p{2.5mm}p{2.5mm}p{2.5mm}p{2.5mm}p{2.5mm}p{2.5mm}p{2.5mm}p{2.5mm}p{2.5mm}p{2.5mm}p{2.5mm}p{2.5mm}p{2.5mm}p{2.5mm}p{2.5mm}p{2.5mm}p{2.5mm}p{2.5mm}p{2.5mm}p{2.5mm}p{2.5mm}|p{2.5mm}p{2.5mm}p{2.5mm}p{2.5mm}p{2.5mm}p{2.5mm}p{2.5mm}p{2.5mm}p{2.5mm}p{2.5mm}}
				\toprule
				\multicolumn{1}{l|}{$q$}    & \multicolumn{1}{l}{\begin{sideways}$x_{021}$\end{sideways}} & \multicolumn{1}{l}{\begin{sideways}$x_{05 2}$\end{sideways}} & \multicolumn{1}{l}{\begin{sideways}$x_{013}$\end{sideways}} & \multicolumn{1}{l}{\begin{sideways}$x_{044}$\end{sideways}} & \multicolumn{1}{l}{\begin{sideways}$x_{131}$\end{sideways}} & \multicolumn{1}{l}{\begin{sideways}$x_{162}$\end{sideways}} & \multicolumn{1}{l}{\begin{sideways}$x_{123}$\end{sideways}} & \multicolumn{1}{l}{\begin{sideways}$x_{154}$\end{sideways}} & \multicolumn{1}{l}{\begin{sideways}$x_{241}$\end{sideways}} & \multicolumn{1}{l}{\begin{sideways}$x_{272}$\end{sideways}} & \multicolumn{1}{l}{\begin{sideways}$x_{233}$\end{sideways}} & \multicolumn{1}{l}{\begin{sideways}$x_{264}$\end{sideways}} & \multicolumn{1}{l}{\begin{sideways}$x_{351}$\end{sideways}} & \multicolumn{1}{l}{\begin{sideways}$x_{382}$\end{sideways}} & \multicolumn{1}{l}{\begin{sideways}$x_{343}$\end{sideways}} & \multicolumn{1}{l}{\begin{sideways}$x_{374}$\end{sideways}} & \multicolumn{1}{l}{\begin{sideways}$x_{461}$\end{sideways}} & \multicolumn{1}{l}{\begin{sideways}$x_{492}$\end{sideways}} & \multicolumn{1}{l}{\begin{sideways}$x_{453}$\end{sideways}} & \multicolumn{1}{l}{\begin{sideways}$x_{484}$\end{sideways}} & \multicolumn{1}{l}{\begin{sideways}$x_{571}$\end{sideways}} & \multicolumn{1}{l}{\begin{sideways}$x_{5102}$\end{sideways}} & \multicolumn{1}{l}{\begin{sideways}$x_{563}$\end{sideways}} & \multicolumn{1}{l}{\begin{sideways}$x_{594}$\end{sideways}} & \multicolumn{1}{l}{\begin{sideways}$x_{681}$\end{sideways}} & \multicolumn{1}{l}{\begin{sideways}$x_{6112}$\end{sideways}} & \multicolumn{1}{l}{\begin{sideways}$x_{673}$\end{sideways}} & \multicolumn{1}{l}{\begin{sideways}$x_{6104}$\end{sideways}} & \multicolumn{1}{l}{\begin{sideways}$x_{791}$\end{sideways}} & \multicolumn{1}{l}{\begin{sideways}$x_{7122}$\end{sideways}} & \multicolumn{1}{l}{\begin{sideways}$x_{783}$\end{sideways}} & \multicolumn{1}{l|}{\begin{sideways}$x_{7114}$\end{sideways}} & \multicolumn{1}{l}{\begin{sideways}$x_{080}$\end{sideways}} & \multicolumn{1}{l}{\begin{sideways}$x_{180}$\end{sideways}} & \multicolumn{1}{l}{\begin{sideways}$x_{280}$\end{sideways}} & \multicolumn{1}{l}{\begin{sideways}$x_{380}$\end{sideways}} & \multicolumn{1}{l}{\begin{sideways}$x_{480}$\end{sideways}} & \multicolumn{1}{l}{\begin{sideways}$x_{580}$\end{sideways}} & \multicolumn{1}{l}{\begin{sideways}$x_{680}$\end{sideways}} & \multicolumn{1}{l}{\begin{sideways}$x_{780}$\end{sideways}} &       &  \multicolumn{1}{l}{\begin{sideways}\end{sideways}}  \\
				\midrule
				0     & \multicolumn{1}{r}{1}     & \multicolumn{1}{r}{1}     & \multicolumn{1}{r}{1}     & \multicolumn{1}{r}{1}     &       &       &       &       &       &       &       &       &       &       &       &       &       &       &       &       &       &       &       &       &       &       &       &       &       &       &       &       & \multicolumn{1}{r}{1}     &       &       &       &       &       &       &       & =     & \multicolumn{1}{r}{2} \\
				1     &       &       & \multicolumn{1}{r}{-1}    &       & \multicolumn{1}{r}{1}     & \multicolumn{1}{r}{1}     & \multicolumn{1}{r}{1}     & \multicolumn{1}{r}{1}     &       &       &       &       &       &       &       &       &       &       &       &       &       &       &       &       &       &       &       &       &       &       &       &       &       & \multicolumn{1}{r}{1}     &       &       &       &       &       &       & =     & \multicolumn{1}{r}{0} \\
				2     & \multicolumn{1}{r}{-1}    &       &       &       &       &       & \multicolumn{1}{r}{-1}    &       & \multicolumn{1}{r}{1}     & \multicolumn{1}{r}{1}     & \multicolumn{1}{r}{1}     & \multicolumn{1}{r}{1}     &       &       &       &       &       &       &       &       &       &       &       &       &       &       &       &       &       &       &       &       &       &       & \multicolumn{1}{r}{1}     &       &       &       &       &       & =     & \multicolumn{1}{r}{0} \\
				3     &       &       &       &       & \multicolumn{1}{r}{-1}    &       &       &       &       &       & \multicolumn{1}{r}{-1}    &       & \multicolumn{1}{r}{1}     & \multicolumn{1}{r}{1}     & \multicolumn{1}{r}{1}     & \multicolumn{1}{r}{1}     &       &       &       &       &       &       &       &       &       &       &       &       &       &       &       &       &       &       &       & \multicolumn{1}{r}{1}     &       &       &       &       & =     & \multicolumn{1}{r}{0} \\
				4     &       &       &       & \multicolumn{1}{r}{-1}    &       &       &       &       & \multicolumn{1}{r}{-1}    &       &       &       &       &       & \multicolumn{1}{r}{-1}    &       & \multicolumn{1}{r}{1}     &       & \multicolumn{1}{r}{1}     & \multicolumn{1}{r}{1}     &       &       &       &       &       &       &       &       &       &       &       &       &       &       &       &       & \multicolumn{1}{r}{1}     &       &       &       & =     & \multicolumn{1}{r}{0} \\
				5     &       & \multicolumn{1}{r}{-1}    &       &       &       &       &       & \multicolumn{1}{r}{-1}    &       &       &       &       & \multicolumn{1}{r}{-1}    &       &       &       &       &       & \multicolumn{1}{r}{-1}    &       & \multicolumn{1}{r}{1}     &       & \multicolumn{1}{r}{1}     &       &       &       &       &       &       &       &       &       &       &       &       &       &       & \multicolumn{1}{r}{1}     &       &       & =     & \multicolumn{1}{r}{0} \\
				6     &       &       &       &       &       & \multicolumn{1}{r}{-1}    &       &       &       &       &       & \multicolumn{1}{r}{-1}    &       &       &       &       & \multicolumn{1}{r}{-1}    &       &       &       &       &       & \multicolumn{1}{r}{-1}    &       & \multicolumn{1}{r}{1}     &       & \multicolumn{1}{r}{1}     &       &       &       &       &       &       &       &       &       &       &       & \multicolumn{1}{r}{1}     &       & =     & \multicolumn{1}{r}{0} \\
				7     &       &       &       &       &       &       &       &       &       & \multicolumn{1}{r}{-1}    &       &       &       &       &       & \multicolumn{1}{r}{-1}    &       &       &       &       & \multicolumn{1}{r}{-1}    &       &       &       &       &       & \multicolumn{1}{r}{-1}    &       &       &       & \multicolumn{1}{r}{1}     &       &       &       &       &       &       &       &       & \multicolumn{1}{r}{1}     & =     & \multicolumn{1}{r}{0} \\
				8     &       &       &       &       &       &       &       &       &       &       &       &       &       & \multicolumn{1}{r}{-1}    &       &       &       &       &       & \multicolumn{1}{r}{-1}    &       &       &       &       & \multicolumn{1}{r}{-1}    &       &       &       &       &       & \multicolumn{1}{r}{-1}    &       & \multicolumn{1}{r}{-1}    & \multicolumn{1}{r}{-1}    & \multicolumn{1}{r}{-1}    & \multicolumn{1}{r}{-1}    & \multicolumn{1}{r}{-1}    & \multicolumn{1}{r}{-1}    & \multicolumn{1}{r}{-1}    & \multicolumn{1}{r}{-1}    & =     & \multicolumn{1}{r}{-2} \\
				\bottomrule
				\multicolumn{43}{l}{}\\
			\end{tabular}%
		}
	}\\	
	\subfloat[Arc-flow formulation after unimodular transformation]{
		\label{fig:UT_afut}
		\rev{
			\begin{tabular}{r|p{2.5mm}p{2.5mm}p{2.5mm}p{2.5mm}p{2.5mm}p{2.5mm}p{2.5mm}p{2.5mm}p{2.5mm}p{2.5mm}p{2.5mm}p{2.5mm}p{2.5mm}p{2.5mm}p{2.5mm}p{2.5mm}p{2.5mm}p{2.5mm}p{2.5mm}p{2.5mm}p{2.5mm}p{2.5mm}p{2.5mm}p{2.5mm}p{2.5mm}p{2.5mm}p{2.5mm}p{2.5mm}p{2.5mm}p{2.5mm}p{2.5mm}p{2.5mm}|p{2.5mm}p{2.5mm}p{2.5mm}p{2.5mm}p{2.5mm}p{2.5mm}p{2.5mm}p{2.5mm}p{2.5mm}p{2.5mm}}
				\toprule
				\multicolumn{1}{l|}{$q$}     & \multicolumn{1}{l}{\begin{sideways}$x_{021}$\end{sideways}} & \multicolumn{1}{l}{\begin{sideways}$x_{05 2}$\end{sideways}} & \multicolumn{1}{l}{\begin{sideways}$x_{013}$\end{sideways}} & \multicolumn{1}{l}{\begin{sideways}$x_{044}$\end{sideways}} & \multicolumn{1}{l}{\begin{sideways}$x_{131}$\end{sideways}} & \multicolumn{1}{l}{\begin{sideways}$x_{162}$\end{sideways}} & \multicolumn{1}{l}{\begin{sideways}$x_{123}$\end{sideways}} & \multicolumn{1}{l}{\begin{sideways}$x_{154}$\end{sideways}} & \multicolumn{1}{l}{\begin{sideways}$x_{241}$\end{sideways}} & \multicolumn{1}{l}{\begin{sideways}$x_{272}$\end{sideways}} & \multicolumn{1}{l}{\begin{sideways}$x_{233}$\end{sideways}} & \multicolumn{1}{l}{\begin{sideways}$x_{264}$\end{sideways}} & \multicolumn{1}{l}{\begin{sideways}$x_{351}$\end{sideways}} & \multicolumn{1}{l}{\begin{sideways}$x_{382}$\end{sideways}} & \multicolumn{1}{l}{\begin{sideways}$x_{343}$\end{sideways}} & \multicolumn{1}{l}{\begin{sideways}$x_{374}$\end{sideways}} & \multicolumn{1}{l}{\begin{sideways}$x_{461}$\end{sideways}} & \multicolumn{1}{l}{\begin{sideways}$x_{492}$\end{sideways}} & \multicolumn{1}{l}{\begin{sideways}$x_{453}$\end{sideways}} & \multicolumn{1}{l}{\begin{sideways}$x_{484}$\end{sideways}} & \multicolumn{1}{l}{\begin{sideways}$x_{571}$\end{sideways}} & \multicolumn{1}{l}{\begin{sideways}$x_{5102}$\end{sideways}} & \multicolumn{1}{l}{\begin{sideways}$x_{563}$\end{sideways}} & \multicolumn{1}{l}{\begin{sideways}$x_{594}$\end{sideways}} & \multicolumn{1}{l}{\begin{sideways}$x_{681}$\end{sideways}} & \multicolumn{1}{l}{\begin{sideways}$x_{6112}$\end{sideways}} & \multicolumn{1}{l}{\begin{sideways}$x_{673}$\end{sideways}} & \multicolumn{1}{l}{\begin{sideways}$x_{6104}$\end{sideways}} & \multicolumn{1}{l}{\begin{sideways}$x_{791}$\end{sideways}} & \multicolumn{1}{l}{\begin{sideways}$x_{7122}$\end{sideways}} & \multicolumn{1}{l}{\begin{sideways}$x_{783}$\end{sideways}} & \multicolumn{1}{l|}{\begin{sideways}$x_{7114}$\end{sideways}} & \multicolumn{1}{l}{\begin{sideways}$x_{080}$\end{sideways}} & \multicolumn{1}{l}{\begin{sideways}$x_{180}$\end{sideways}} & \multicolumn{1}{l}{\begin{sideways}$x_{280}$\end{sideways}} & \multicolumn{1}{l}{\begin{sideways}$x_{380}$\end{sideways}} & \multicolumn{1}{l}{\begin{sideways}$x_{480}$\end{sideways}} & \multicolumn{1}{l}{\begin{sideways}$x_{580}$\end{sideways}} & \multicolumn{1}{l}{\begin{sideways}$x_{680}$\end{sideways}} & \multicolumn{1}{l}{\begin{sideways}$x_{780}$\end{sideways}} &       &  \multicolumn{1}{l}{\begin{sideways}\end{sideways}}  \\
				\midrule
				0     & \multicolumn{1}{r}{1}     & \multicolumn{1}{r}{1}     & \multicolumn{1}{r}{1}     & \multicolumn{1}{r}{1}     &       &       &       &       &       &       &       &       &       &       &       &       &       &       &       &       &       &       &       &       &       &       &       &       &       &       &       &       & \multicolumn{1}{r}{1}     &       &       &       &       &       &       &       & =     & \multicolumn{1}{r}{2} \\
				1     & \multicolumn{1}{r}{1}     & \multicolumn{1}{r}{1}     &       & \multicolumn{1}{r}{1}     & \multicolumn{1}{r}{1}     & \multicolumn{1}{r}{1}     & \multicolumn{1}{r}{1}     & \multicolumn{1}{r}{1}     &       &       &       &       &       &       &       &       &       &       &       &       &       &       &       &       &       &       &       &       &       &       &       &       & \multicolumn{1}{r}{1}     & \multicolumn{1}{r}{1}     &       &       &       &       &       &       & =     & \multicolumn{1}{r}{2} \\
				2     &       & \multicolumn{1}{r}{1}     &       & \multicolumn{1}{r}{1}     & \multicolumn{1}{r}{1}     & \multicolumn{1}{r}{1}     &       & \multicolumn{1}{r}{1}     & \multicolumn{1}{r}{1}     & \multicolumn{1}{r}{1}     & \multicolumn{1}{r}{1}     & \multicolumn{1}{r}{1}     &       &       &       &       &       &       &       &       &       &       &       &       &       &       &       &       &       &       &       &       & \multicolumn{1}{r}{1}     & \multicolumn{1}{r}{1}     & \multicolumn{1}{r}{1}     &       &       &       &       &       & =     & \multicolumn{1}{r}{2} \\
				3     &       & \multicolumn{1}{r}{1}     &       & \multicolumn{1}{r}{1}     &       & \multicolumn{1}{r}{1}     &       & \multicolumn{1}{r}{1}     & \multicolumn{1}{r}{1}     & \multicolumn{1}{r}{1}     &       & \multicolumn{1}{r}{1}     & \multicolumn{1}{r}{1}     & \multicolumn{1}{r}{1}     & \multicolumn{1}{r}{1}     & \multicolumn{1}{r}{1}     &       &       &       &       &       &       &       &       &       &       &       &       &       &       &       &       & \multicolumn{1}{r}{1}     & \multicolumn{1}{r}{1}     & \multicolumn{1}{r}{1}     & \multicolumn{1}{r}{1}     &       &       &       &       & =     & \multicolumn{1}{r}{2} \\
				4     &       & \multicolumn{1}{r}{1}     &       &       &       & \multicolumn{1}{r}{1}     &       & \multicolumn{1}{r}{1}     &       & \multicolumn{1}{r}{1}     &       & \multicolumn{1}{r}{1}     & \multicolumn{1}{r}{1}     & \multicolumn{1}{r}{1}     &       & \multicolumn{1}{r}{1}     & \multicolumn{1}{r}{1}     &       & \multicolumn{1}{r}{1}     & \multicolumn{1}{r}{1}     &       &       &       &       &       &       &       &       &       &       &       &       & \multicolumn{1}{r}{1}     & \multicolumn{1}{r}{1}     & \multicolumn{1}{r}{1}     & \multicolumn{1}{r}{1}     & \multicolumn{1}{r}{1}     &       &       &       & =     & \multicolumn{1}{r}{2} \\
				5     &       &       &       &       &       & \multicolumn{1}{r}{1}     &       &       &       & \multicolumn{1}{r}{1}     &       & \multicolumn{1}{r}{1}     &       & \multicolumn{1}{r}{1}     &       & \multicolumn{1}{r}{1}     & \multicolumn{1}{r}{1}     &       &       & \multicolumn{1}{r}{1}     & \multicolumn{1}{r}{1}     &       & \multicolumn{1}{r}{1}     &       &       &       &       &       &       &       &       &       & \multicolumn{1}{r}{1}     & \multicolumn{1}{r}{1}     & \multicolumn{1}{r}{1}     & \multicolumn{1}{r}{1}     & \multicolumn{1}{r}{1}     & \multicolumn{1}{r}{1}     &       &       & =     & \multicolumn{1}{r}{2} \\
				6     &       &       &       &       &       &       &       &       &       & \multicolumn{1}{r}{1}     &       &       &       & \multicolumn{1}{r}{1}     &       & \multicolumn{1}{r}{1}     &       &       &       & \multicolumn{1}{r}{1}     & \multicolumn{1}{r}{1}     &       &       &       & \multicolumn{1}{r}{1}     &       & \multicolumn{1}{r}{1}     &       &       &       &       &       & \multicolumn{1}{r}{1}     & \multicolumn{1}{r}{1}     & \multicolumn{1}{r}{1}     & \multicolumn{1}{r}{1}     & \multicolumn{1}{r}{1}     & \multicolumn{1}{r}{1}     & \multicolumn{1}{r}{1}     &       & =     & \multicolumn{1}{r}{2} \\
				7     &       &       &       &       &       &       &       &       &       &       &       &       &       & \multicolumn{1}{r}{1}     &       &       &       &       &       & \multicolumn{1}{r}{1}     &       &       &       &       & \multicolumn{1}{r}{1}     &       &       &       &       &       & \multicolumn{1}{r}{1}     &       & \multicolumn{1}{r}{1}     & \multicolumn{1}{r}{1}     & \multicolumn{1}{r}{1}     & \multicolumn{1}{r}{1}     & \multicolumn{1}{r}{1}     & \multicolumn{1}{r}{1}     & \multicolumn{1}{r}{1}     & \multicolumn{1}{r}{1}     & =     & \multicolumn{1}{r}{2} \\
				8     &       &       &       &       &       &       &       &       &       &       &       &       &       &       &       &       &       &       &       &       &       &       &       &       &       &       &       &       &       &       &       &       &       &       &       &       &       &       &       &       & =     & \multicolumn{1}{r}{0} \\
				\bottomrule
			\end{tabular}
		}
	}
	\caption{\rev{Illustration of the equivalence between TI and AF on the example of Figure \ref{fig:Ex1Alg1}}}
	\label{fig:UT}
\end{figure}
	
\subsection{Enhanced arc-flow formulation}\label{sec:method}
In this section, we show how to further reduce the number of variables and constraints required by the AF formulation \eqref{FO:AF}--\eqref{constr4:AF}, improving its computational behavior while preserving optimality.

The size of the AF formulation linearly depends from the horizon $T$, thus, a proper time horizon estimation is necessary. To this aim, we notice that \cite{vandenAkkeretal1999} considered the properties of an optimal schedule, originally developed by \cite{ElmaghrabyPark1974}, and remarked that there exists at least an optimal solution for which ``the last job on any machine is completed between time $H_{\min} = \frac{1}{m}\sum_{j \in J} p_j - \frac{(m-1)}{m}p_{\max}$ and $H_{\max} = \frac{1}{m}\sum_{j \in J} p_j + \frac{(m-1)}{m}p_{\max}$", where $p_{\max} = \max\nolimits_{j\in J} p_j$. On the basis of this statement, we can use $H_{\max}$ to set $T$ as
\begin{equation}\label{eq:T}
T = \left\lfloor\frac{1}{m}\sum\nolimits_{j\in J} p_j + \frac{(m-1)}{m} p_{\max} \right\rfloor\\
\end{equation}
We can then use the value of $H_{\min}$ to limit the number of loss arcs. As the last job on any machine is completed at or after $H_{\min}$, we  can create only loss arcs starting from vertices $q \in N$ with $q \geq \lceil H_{\min} \rceil$. We further increase this bound by considering Property 1 in \cite{AzizogluandKirca19999b}, thus obtaining
\begin{equation} \label{eq:Tbar_b}
T' = \left\lceil\frac{1}{m}\sum\nolimits_{j\in J} p_j - \frac{ \sum_{k = 1}^{m-1}\bar{p}_k }{m} \right\rceil\\
\end{equation}
where $\bar{p}$ is an array containing the processing times of all jobs $j \in J$ in non-increasing order. Since $\sum_{k = 1}^{m-1}\bar{p}_k \leq (m-1)p_{\max}$ holds, the  value in \eqref{eq:Tbar_b} is not less than $H_{\min}$.

Our next enhancement relies on the creation of a so-called \emph{time window} $[a_j, b_j]$ for each job $j \in J$.
By using once more the properties of an optimal schedule in \cite{vandenAkkeretal1999}, we derive an earliest possible start time $a_j$ and a latest possible start time $b_j$ that guarantee the existence of an optimal solution. The values of $a_j$ and $b_j$ are based on the property that, if $w_j \geq w_k$ and $p_j \leq p_k$ for a certain pair of jobs $j$ and $k$, then there exists an optimal solution in which $j$ starts not later than $k$.
The time windows are computed as follows. For each $j \in J$, we first define $\mathcal{P}_j = \{ k \in J: k < j, w_k \geq w_j, p_k \leq p_j \}$ and $\mathcal{L}_j = \{ k \in J: k > j, w_k \leq w_j, p_k \geq p_j \}$.
Following the aforementioned property, there exists an optimal schedule in which all jobs in $\mathcal{P}_j$ start no later than $j$, so, if $\mathcal{P}_j$ contains at least $m$ elements, one may conclude that at least $|\mathcal{P}_j|-m+1$ jobs in $\mathcal{P}_j$ are finished before $j$ starts being processed. Consequently, if $|\mathcal{P}_j| < m$ we set $a_j = 0$, otherwise we set $a_j = \lceil \rho_j/m \rceil$, where $\rho_j$ is the sum of the $|\mathcal{P}_j|-m+1$ smallest processing times. In an analogous mode, one can note that there is an optimal solution in which $j$ starts no later than the jobs in $\mathcal{L}_j$. Consequently, for each job $j \in J$ a maximum starting time can be set as $b_j = T - \left\lceil \left(\sum_{k \in \mathcal{L}_j} p_k + p_j \right)/m \right\rceil$. In addition, if $\mathcal{L}_j = \emptyset$ then, as stated in \cite{BelouadahandPotts1994}, one can set ${b}_j = \lceil \left(\sum_{k \in J} p_k - p_j\right)/m  \rceil$.

The next procedure that we propose attempts to reduce the number of arcs by grouping together identical jobs. To this aim, we merge together all jobs $j \in J$ having identical $p_j$ and $w_j$ values into {\em job types}. Let $J^{\prime} = \{1, 2, \dots,n^{\prime}\}$ be the resulting set of job types, and $d_j$ be the number of jobs contained in each job type. With respect to the original AF formulation, this change involve creating a different set of arcs $A^{\prime}$ and replacing the original binary variables with integer variables, as shown next. This allows to reduce consistently the number of symmetries in the model.
The time windows for each job type $j \in J^{\prime}$ are simply obtained by setting $a_j = \min \{a_k: k \in J, p_k = p_j, w_k = w_j\}$ and $b_j = \max \{b_k: k \in J, p_k = p_j, w_k = w_j\}$.
We then create only arcs that start in a time $q \in [a_j, b_j]$, for each job type $j \in J^{\prime}$.

Our {\em enhanced arc flow formulation} (EAF) is then:
\begin{align}
(\mbox{EAF}) \quad \min \sum_{(q,r,j) \in A'} w_j q x_{qrj} + \sum_{j \in J'} w_j p_j \label{FO:EAF}\\
\sum_{(q,r,j) \in \delta'^{+}(q)} x_{qrj} - \sum_{(p,q,j) \in \delta'^{-}(q)} x_{pqj} = \left\{
\begin{array}{l l}
m, & \quad \text{ if $q = 0$}\\
-m, & \quad \text{ if $q = T$}\\
0, & \quad \text{otherwise}
\end{array}\right. & &&  q \in N'\label{constr1:EAF}\\
\sum_{(q,r,j) \in A'} x_{qrj} \geq d_j & &&  j \in J' \label{constr2:EAF}\\
x_{qrj} \in \{0,\dots,d_j\} & &&  (q,r,j) \in A' \setminus A'_0\label{constr3:EAF}\\
0 \leq x_{qT0} \leq m & && { (q,T,0) \in A'_0} \label{constr4:EAF}
\end{align}

The EAF model \eqref{FO:EAF}--\eqref{constr4:EAF} is based on a reduced multigraph $G'=(N', A')$, in which both sets of nodes and arcs are obtained by applying the above reductions criteria from the original graph $G$ used for AF. The EAF model considers the set of job types $J'$ instead of that of jobs $J$ in AF, and consequently adopts an integer variable $x_{qrj}$ giving the \rev{number of jobs of type} $j$ that are scheduled from $q$ to $r=q+p_j$. Each variable of this \rev{type might take a value} at {most} $d_j$, as stated in constraints \eqref{constr3:EAF}. \rev{Constraints} \eqref{constr1:EAF} impose flow conservation on the $m$ paths, and constraints \eqref{constr2:EAF} impose demand to be satisfied.

The way in which the EAF multi-graph $G'$ is built is shown in Algorithm \ref{alg:PandA_2}, which updates the previous Algorithm \ref{alg:PandA} used for AF.
The procedure initializes the sets $N'$ and $A'$ of nodes and arcs, respectively, to the empty set. It then considers at steps 4--9 one job type $j$ at a time, according to the WSPT rule, and creates the $A^{\prime}_j$ sets keeping into account that each job type $j$ contains $d_j$ identical jobs and should start at a $q \in [a_j,b_j]$.
\begin{algorithm}[htb]
	\caption{Construction of the EAF multi-graph} \label{alg:PandA_2}
	\begin{algorithmic}[1]
		\Procedure {CreatePatterns\_and\_Arcs}{$T$}
		\State {\bf{initialize}} $P[0 \dots T] \gets 0;$ {\footnotesize {\color{gray} \Comment $P$: array of size $T+1$}}
		\State {\bf{initialize}} $N' \gets \emptyset; A'[0 \dots n] \gets \emptyset;$ {\footnotesize {\color{gray} \Comment $N'$: set of {vertices}; $A'$: set of arcs}}
		\State $P[0] \gets 1;$
		\For {$j \in J'$ according to the WSPT rule}
		\For {$t \gets b_j$ down to $a_j$}
		\If {$P[t] = 1$}
		\For {$q \gets 1$ to $d_j$}
		\If { {$t + q p_j \leq b_j$} } { $P[t + q p_j] \gets 1; \rev{A'[j] \gets A'[j]} \cup \{(t, t + q p_j,j)\}$};  \EndIf
		\EndFor
		\EndIf
		\EndFor
		\EndFor
		\For {$t \gets 0$ to $T$}
		\If {$P[t] = 1$}
		\State {$N' \gets N' \cup \{t\}$};
		\If {$T' \leq t { < T} $} {$\rev{A'[0] \gets A'[0]} \cup \{(t,T,0)\}$;} {\footnotesize {\color{gray} \Comment \rev{$A'[0]$}: set of loss arcs}} \EndIf					
		\EndIf
		\EndFor
		\State ${A^{\prime} \gets \cup_{j \in J+} \rev{A^{\prime}[j]}}$
		\State \bf{return} ${N^{\prime},A^{\prime} }$
		\EndProcedure
	\end{algorithmic}
\end{algorithm}

\section{Computational experiments} \label{sec:results}

The discussed models have been coded in C++ and solved using Gurobi Optimizer 7.0. The experiments were performed by using a single thread on a PC equipped with an Intel Xeon E5530 $2.40$ GHz quad-core processor and $20$GB of RAM, running under Ubuntu $14.04.5$ LTS. We first discuss the \rev{benchmark instances used for the experiments}, then present an upper bounding procedure devised to speed up the convergence of the models, and finally we present an extensive computational evaluation.

\subsection{Benchmark instances}\label{subsec:instances}

In our experiments, we considered \rev{two benchmark sets of instances.}

\rev{The first set is derived from the one} proposed by \cite{BulbulandSen2017} for the $R||\sum w_jC_j$. Their set is made by instances with $n \in \{30, 100, 400, 1000\}$ and $m \in \{2,4,6,8,16,30\}$. Processing times $p_{j}^{k}$ (i.e., processing time of job $j$ in machine $k$) were drawn according to a uniform distribution $U[1,p_{\max}]$, where $p_{\max} \in \{20, 100\}$, and penalty weights $w_j$ were created using a uniform distribution $U[1,20]$. For each combination of $(n,m,p_{\max})$, except when $n =30$ and $m \in \{16, 30\}$, $10$ instances were created, resulting in a set of $440$ instances which is now available at \url{ http://people.sabanciuniv.edu/bulbul/papers/Bulbul_Sen_Rm_TWCT_data-results_JoS_2016.rar}.  We adapted these instances to the $P||\sum w_jC_j$ by imposing the processing time of each job $j$ to be equal to its processing time on the first unrelated machine in the $R||\sum w_jC_j$ (i.e., $p_j = p_j^1$ $\forall j$).
To better evaluate the performance of the models on large instances, we used the procedure adopted by \cite{BulbulandSen2017} {to create a new additional set with $n = 700$, obtaining in this way a total of 560 instances.}

\rev{The second set has been proposed by \cite{KowalczykandLeus2018} and consists of $2400$ instances with $n \in \{20, 50, 100, 150\}$ and $m \in \{3, 5, 8, 10, 12\}$. The instances are divided into six different classes according to the distribution of processing times and weights.  For each class and each combination of $(n,m)$, $20$ instances were created.}
	
\subsection{Upper bound by iterated local search}\label{subsec:UB}
To start the model with a valid upper bound, we developed a modified version of the {\emph{iterated local search}} (ILS) based metaheuristic of \cite{KramerS2015}. The original method consists of a multi-start ILS for general earliness-tardiness scheduling problems on unrelated machines, and incorporates special structures to reduce the complexity for exploring the neighborhoods.

In general words, the ILS  by \cite{KramerS2015} is composed by constructive, local search and perturbation phases. 			
We modified the construction and local search phases to take into account that in the $P||\sum w_jC_j$ all machine schedules follow the WSPT rule. That resulted in a speed up of the algorithm. The initial solutions, which are obtained either randomly or by a {{greedy randomized adaptive search procedure}} \rev{(GRASP)}, are now sorted according to the WSPT rule on each machine.
\rev{The two main differences with respect to \cite{KramerS2015} regard the GRASP construction procedure and the local search phase. For the GRASP, we initially sort jobs according to the WSPT rule instead of performing a random sorting. In addition, at the end of the procedure we consider each machine in turn and sort the jobs that have been assigned to it by using once more the WSPT rule.}

\rev{Regarding the local search, \cite{KramerS2015} employed a randomized variable neighborhood descent (RVND) procedure (see \citealt{MladenovicHansen1997}). In our modified version, the intra-machine neighborhood structures have been replaced by a simple WSPT sorting procedure, which is invoked at the end of each inter-machine neighborhood search, as depicted in Algorithm \ref{alg:RVND}.}

\rev{The ILS was  adopted to provide an initial feasible solution for all our methods below as follows. It was not executed for  small-size instances having $n \leq 100$ jobs. It was instead executed for 100 seconds for medium-size instances with $100 < n < 400$ jobs, and for 300 seconds for large-size instances with $n \geq 400$ jobs.}

\begin{algorithm}[ht]
	\caption{RVND}
	\label{alg:RVND}
	\begin{algorithmic}[1]
		\Procedure {RVND}{$\pi$} {\footnotesize {\color{gray} \Comment $\pi$ is the input solution}}
		\State {\bf{initialize}} $L = L_0;$ \label{alg:line_listRVND} {\footnotesize {\color{gray} \Comment $L_0$: list containing all inter-machines neighborhood structures}}
		\While {$L \neq \emptyset$}
		\State {select a neighborhood $N \in L$ at random;}
		\State {find $\pi'\in N$, the best neighbor solution of $\pi$;}
		\For {$k \gets 1$ to $m$} {sort jobs in $\pi'[k]$} according to WSPT; \label{alg:line_sortRVND} \EndFor
		\If{$f(\pi') < f(\pi)$}  {\footnotesize {\color{gray} \Comment $f(\pi)$ represents the cost of solution $\pi$}}
		\State {$\pi'\gets \pi'$;}
		\State {\bf{reinitialize} $L;$} 
		\Else
		\State {$L \gets L \setminus \{N\}$;}{\footnotesize {\color{gray} \Comment remove $N$ from $L$}}
		\EndIf
		\EndWhile
		\State \bf{return} $\pi$
		\EndProcedure
	\end{algorithmic}
\end{algorithm}

\subsection{Computational results on benchmark set 1}\label{subsec:experiments}
In Tables \ref{tab:res_p20} and \ref{tab:res_p100} we compare the performance of formulations \rev{CIQP (model \eqref{FO:CQP}--\eqref{constr3:CQP}),  PTI (model \eqref{FO:PTI}--\eqref{constr5:PTI}), TI (model \eqref{FO:TI}--\eqref{constr3:TI}), SC  (model \eqref{FO:SC}--\eqref{constr3:SC}), AF (model \eqref{FO:AF}--\eqref{constr4:AF}) and EAF (model \eqref{FO:EAF}--\eqref{constr4:EAF})}. To solve SC, we reimplemented the B\&P by \cite{vandenAkkeretal1999}. \rev{For instances with more than 100 jobs, each} method received as initial cutoff the upper bound \rev{produced by the ILS algorithm that we developed  using the time limit detailed at the end of} Section \ref{subsec:UB}.

Table \ref{tab:res_p20} summarizes the results that we obtained for the instances with $p_{\max} = 20$, whereas Table \ref{tab:res_p100} focuses on the case where $p_{\max} = 100$. For each group of $10$ instances defined by the couple $(n,m)$ and for each attempted method, we report the number of instances for which at least the root node of the model was solved, {\#root} (not reported for CIPQ and PTI) \rev{and} the number of optimal solutions found, {\#opt}.
\rev{In columns t(s) we report the average execution time in seconds for the 10 instances in the line. If for some of these instances either time or memory limit has been reached, then we consider the entire time limit in the computation of the average t(s) value. Note that we directly write \emph{t.lim}, respectively \emph{m.lim}, when the time limit, respectively memory limit, has been reached on all the 10 instances in the line. Note also that, to facilitate direct comparison among the methods, t(s) does not contain the time required for running the ILS.}

A ``-'' indicates that the value in the entry is not available because the model was not run on that group of instances. For AF and EAF we also report the average gap per million, computed as gap$_{\text{pm}}$=$10^6(U-L)/U$, with $U$ and $L$ being, respectively, the best upper and lower bound value obtained in the run (a ``-'' is reported \rev{when no valid $L$ is obtained, i.e., when even the LP relaxation of the model was not solved due to time or memory limits)}.

\begin{table}[htbp]
	\centering
	\caption{Results for \rev{set 1 instances with $p_{\max}=20$ (time limit = 300 seconds, ILS time not included)}}
	\scriptsize
	\setlength{\tabcolsep}{0.4mm}
	\rev{
		\begin{tabular}{lrrrrrrrrrrrrrrrrrrrrrrrrr}
			\toprule
			\multirow{3}[6]{*}{$n$} & \multirow{3}[6]{*}{$m$} &       & \multicolumn{13}{c}{Existing formulations}                                                            &       & \multicolumn{9}{c}{New formulations} \\
			\cmidrule{4-16}\cmidrule{18-26}          &       &       & \multicolumn{2}{c}{CIQP} &       & \multicolumn{2}{c}{PTI} &       & \multicolumn{3}{c}{SC} &       & \multicolumn{3}{c}{TI} &       & \multicolumn{4}{c}{AF}        &       & \multicolumn{4}{c}{EAF} \\
			\cmidrule{4-5}\cmidrule{7-8}\cmidrule{10-12}\cmidrule{14-16}\cmidrule{18-21}\cmidrule{23-26}          &       &       & {\#opt} & {t(s)} &       & {\#opt} & {t(s)} &       & {\#root} & {\#opt} & {t(s)} &       & {\#root} & {\#opt} & {t(s)} &       & {\#root} & {\#opt} & {t(s)} & {gap$_\text{pm}$} &       & {\#root} & {\#opt} & {t(s)} & {gap$_\text{pm}$} \\
			\midrule
			\multirow{4}[2]{*}{30} & 2     &       & 10    & 1.1   &       & 10    & 11.1  &       & 10    & 10    & 0.2   &       & 10    & 10    & 1.2   &       & 10    & 10    & 0.1   & 0.0   &       & 10    & 10    & 0.0   & 0.0 \\
			& 4     &       & 1     & 278.1 &       & 8     & 165.9 &       & 10    & 10    & 0.1   &       & 10    & 10    & 0.5   &       & 10    & 10    & 0.1   & 0.0   &       & 10    & 10    & 0.0   & 0.0 \\
			& 6     &       & 0     & t.lim &       & 3     & 240.1 &       & 10    & 10    & 0.0   &       & 10    & 10    & 0.2   &       & 10    & 10    & 0.0   & 0.0   &       & 10    & 10    & 0.0   & 0.0 \\
			& 8     &       & 0     & t.lim &       & 3     & 243.9 &       & 10    & 10    & 0.0   &       & 10    & 10    & 0.2   &       & 10    & 10    & 0.0   & 0.0   &       & 10    & 10    & 0.0   & 0.0 \\
			\midrule
			\multirow{6}[2]{*}{100} & 2     &       & 0     & t.lim &       & 0     & t.lim &       & 9     & 0     & t.lim &       & 10    & 10    & 199.8 &       & 10    & 10    & 1.6   & 0.0   &       & 10    & 10    & 0.9   & 0.0 \\
			& 4     &       & 0     & t.lim &       & 0     & t.lim &       & 10    & 3     & 276.5 &       & 10    & 10    & 65.3  &       & 10    & 10    & 2.1   & 0.0   &       & 10    & 10    & 0.4   & 0.0 \\
			& 6     &       & 0     & t.lim &       & 0     & t.lim &       & 10    & 10    & 101.3 &       & 10    & 10    & 29.5  &       & 10    & 10    & 0.5   & 0.0   &       & 10    & 10    & 0.4   & 0.0 \\
			& 8     &       & 0     & t.lim &       & 0     & t.lim &       & 10    & 10    & 15.9  &       & 10    & 10    & 19.3  &       & 10    & 10    & 0.4   & 0.0   &       & 10    & 10    & 0.3   & 0.0 \\
			& 16    &       & 0     & t.lim &       & 0     & t.lim &       & 10    & 10    & 2.2   &       & 10    & 10    & 3.9   &       & 10    & 10    & 0.1   & 0.0   &       & 10    & 10    & 0.1   & 0.0 \\
			& 30    &       & 0     & t.lim &       & 0     & t.lim &       & 10    & 10    & 0.4   &       & 10    & 10    & 1.2   &       & 10    & 10    & 0.1   & 0.0   &       & 10    & 10    & 0.1   & 0.0 \\
			\midrule
			\multirow{6}[2]{*}{400} & 2     &       & 0     & t.lim &       & 0     & t.lim &       & 0     & 0     & t.lim &       & 0     & 0     & m.lim &       & 10    & 10    & 15.0  & 0.0   &       & 10    & 10    & 4.8   & 0.0 \\
			& 4     &       & 0     & t.lim &       & 0     & t.lim &       & 0     & 0     & t.lim &       & 0     & 0     & m.lim &       & 10    & 10    & 29.3  & 0.0   &       & 10    & 10    & 17.0  & 0.0 \\
			& 6     &       & 0     & t.lim &       & 0     & t.lim &       & 0     & 0     & t.lim &       & 0     & 0     & m.lim &       & 10    & 10    & 32.3  & 0.0   &       & 10    & 10    & 13.2  & 0.0 \\
			& 8     &       & 0     & t.lim &       & 0     & t.lim &       & 0     & 0     & t.lim &       & 0     & 0     & m.lim &       & 10    & 10    & 33.7  & 0.0   &       & 10    & 10    & 10.6  & 0.0 \\
			& 16    &       & 0     & t.lim &       & 0     & t.lim &       & 0     & 0     & t.lim &       & 0     & 0     & m.lim &       & 10    & 10    & 16.2  & 0.0   &       & 10    & 10    & 3.5   & 0.0 \\
			& 30    &       & 0     & t.lim &       & 0     & t.lim &       & 10    & 0     & t.lim &       & 10    & 3     & 272.9 &       & 10    & 10    & 3.9   & 0.0   &       & 10    & 10    & 0.9   & 0.0 \\
			\midrule
			\multirow{6}[2]{*}{700} & 2     &       & -     & -     &       & -     & -     &       & -     & -     & -     &       & -     & -     & -     &       & 10    & 6     & 191.3 & 0.1   &       & 10    & 10    & 66.8  & 0.0 \\
			& 4     &       & -     & -     &       & -     & -     &       & -     & -     & -     &       & -     & -     & -     &       & 10    & 7     & 220.2 & 0.4   &       & 10    & 10    & 56.9  & 0.0 \\
			& 6     &       & -     & -     &       & -     & -     &       & -     & -     & -     &       & -     & -     & -     &       & 10    & 4     & 240.6 & 1.7   &       & 10    & 10    & 55.4  & 0.0 \\
			& 8     &       & -     & -     &       & -     & -     &       & -     & -     & -     &       & -     & -     & -     &       & 10    & 6     & 185.0 & 1.9   &       & 10    & 10    & 86.4  & 0.0 \\
			& 16    &       & -     & -     &       & -     & -     &       & -     & -     & -     &       & -     & -     & -     &       & 10    & 9     & 202.3 & 0.2   &       & 10    & 10    & 20.3  & 0.0 \\
			& 30    &       & -     & -     &       & -     & -     &       & -     & -     & -     &       & -     & -     & -     &       & 10    & 10    & 39.9  & 0.0   &       & 10    & 10    & 3.1   & 0.0 \\
			\midrule
			\multirow{6}[2]{*}{1000} & 2     &       & -     & -     &       & -     & -     &       & -     & -     & -     &       & -     & -     & -     &       & 10    & 0     & t.lim & 0.3   &       & 10    & 10    & 105.9 & 0.0 \\
			& 4     &       & -     & -     &       & -     & -     &       & -     & -     & -     &       & -     & -     & -     &       & 10    & 0     & t.lim & 1.5   &       & 10    & 10    & 89.6  & 0.0 \\
			& 6     &       & -     & -     &       & -     & -     &       & -     & -     & -     &       & -     & -     & -     &       & 10    & 3     & 287.2 & 2.1   &       & 10    & 10    & 139.6 & 0.0 \\
			& 8     &       & -     & -     &       & -     & -     &       & -     & -     & -     &       & -     & -     & -     &       & 10    & 3     & 278.7 & 1.8   &       & 10    & 10    & 91.6  & 0.0 \\
			& 16    &       & -     & -     &       & -     & -     &       & -     & -     & -     &       & -     & -     & -     &       & 10    & 0     & 300.0 & 8.6   &       & 10    & 10    & 64.2  & 0.0 \\
			& 30    &       & -     & -     &       & -     & -     &       & -     & -     & -     &       & -     & -     & -     &       & 10    & 4     & 264.3 & 8.2   &       & 10    & 10    & 24.0  & 0.0 \\
			\midrule
			total/avg &       &       & 11    & 288.5 &       & 24    & 266.3 &       & 109   & 83    & 156.0 &       & 110   & 103   & 54.0  &       & 280   & 212   & 105.2 & 1.0   &       & 280   & 280   & 30.6  & 0.0 \\
			\bottomrule
		\end{tabular}%
	}
	\label{tab:res_p20}
\end{table}%

\begin{table}[htbp]
	\centering
	\caption{Results for \rev{set 1 instances with $p_{\max}=100$ (time limit = 300 seconds, ILS time not included)}}
	\scriptsize
	\setlength{\tabcolsep}{0.37mm}
	\rev{
		\begin{tabular}{lrrrrrrrrrrrrrrrrrrrrrrrrr}
			\toprule
			\multirow{3}[6]{*}{$n$} & \multirow{3}[6]{*}{$m$} &       & \multicolumn{13}{c}{Existing formulations}                                                            &       & \multicolumn{9}{c}{New formulations} \\
			\cmidrule{4-16}\cmidrule{18-26}          &       &       & \multicolumn{2}{c}{CIQP} &       & \multicolumn{2}{c}{PTI} &       & \multicolumn{3}{c}{SC} &       & \multicolumn{3}{c}{TI} &       & \multicolumn{4}{c}{AF}        &       & \multicolumn{4}{c}{EAF} \\
			\cmidrule{4-5}\cmidrule{7-8}\cmidrule{10-12}\cmidrule{14-16}\cmidrule{18-21}\cmidrule{23-26}          &       &       & {\#opt} & {t(s)} &       & {\#opt} & {t(s)} &       & {\#root} & {\#opt} & {t(s)} &       & {\#root} & {\#opt} & {t(s)} &       & {\#root} & {\#opt} & {t(s)} & {gap$_\text{pm}$} &       & {\#root} & {\#opt} & {t(s)} & {gap$_\text{pm}$} \\
			\midrule
			\multirow{4}[2]{*}{30} & 2     &       & 10    & 0.8   &       & 10    & 115.0 &       & 10    & 10    & 0.2   &       & 10    & 10    & 73.3  &       & 10    & 10    & 0.5   & 0.0   &       & 10    & 10    & 0.3   & 0.0 \\
			& 4     &       & 0     & t.lim &       & 0     & t.lim &       & 10    & 10    & 0.1   &       & 10    & 10    & 18.6  &       & 10    & 10    & 0.3   & 0.0   &       & 10    & 10    & 0.2   & 0.0 \\
			& 6     &       & 0     & t.lim &       & 0     & t.lim &       & 10    & 10    & 0.0   &       & 10    & 10    & 6.9   &       & 10    & 10    & 0.2   & 0.0   &       & 10    & 10    & 0.2   & 0.0 \\
			& 8     &       & 0     & t.lim &       & 0     & t.lim &       & 10    & 10    & 0.0   &       & 10    & 10    & 4.4   &       & 10    & 10    & 0.2   & 0.0   &       & 10    & 10    & 0.1   & 0.0 \\
			\midrule
			\multirow{6}[2]{*}{100} & 2     &       & 0     & t.lim &       & 0     & t.lim &       & 0     & 0     & t.lim &       & 0     & 0     & m.lim &       & 10    & 10    & 10.7  & 0.0   &       & 10    & 10    & 7.6   & 0.0 \\
			& 4     &       & 0     & t.lim &       & 0     & t.lim &       & 10    & 3     & 257.2 &       & 0     & 0     & m.lim &       & 10    & 10    & 36.5  & 0.2   &       & 10    & 10    & 25.9  & 0.2 \\
			& 6     &       & 0     & t.lim &       & 0     & t.lim &       & 10    & 4     & 224.5 &       & 0     & 0     & m.lim &       & 10    & 10    & 14.5  & 0.0   &       & 10    & 10    & 8.3   & 0.0 \\
			& 8     &       & 0     & t.lim &       & 0     & t.lim &       & 10    & 3     & 239.7 &       & 0     & 0     & t.lim &       & 10    & 10    & 8.1   & 0.0   &       & 10    & 10    & 10.3  & 0.0 \\
			& 16    &       & 0     & t.lim &       & 0     & t.lim &       & 10    & 8     & 86.9  &       & 9     & 6     & 264.0 &       & 10    & 10    & 2.5   & 0.0   &       & 10    & 10    & 1.5   & 0.0 \\
			& 30    &       & 0     & t.lim &       & 0     & t.lim &       & 10    & 10    & 1.7   &       & 10    & 10    & 49.3  &       & 10    & 10    & 0.6   & 0.0   &       & 10    & 10    & 0.5   & 0.0 \\
			\midrule
			\multirow{6}[2]{*}{400} & 2     &       & 0     & t.lim &       & 0     & t.lim &       & 0     & 0     & t.lim &       & 0     & 0     & m.lim &       & 10    & 0     & t.lim & 0.6   &       & 10    & 10    & 264.6 & 0.0 \\
			& 4     &       & 0     & t.lim &       & 0     & t.lim &       & 0     & 0     & t.lim &       & 0     & 0     & m.lim &       & 10    & 2     & 292.5 & 2.6   &       & 10    & 10    & 180.7 & 0.0 \\
			& 6     &       & 0     & t.lim &       & 0     & t.lim &       & 0     & 0     & t.lim &       & 0     & 0     & m.lim &       & 10    & 7     & 272.2 & 2.6   &       & 10    & 10    & 176.6 & 0.0 \\
			& 8     &       & 0     & t.lim &       & 0     & t.lim &       & 0     & 0     & t.lim &       & 0     & 0     & m.lim &       & 10    & 9     & 233.3 & 1.0   &       & 10    & 10    & 177.5 & 0.0 \\
			& 16    &       & 0     & t.lim &       & 0     & t.lim &       & 0     & 0     & t.lim &       & 0     & 0     & m.lim &       & 10    & 8     & 181.4 & 4.6   &       & 10    & 10    & 111.8 & 0.0 \\
			& 30    &       & 0     & t.lim &       & 0     & t.lim &       & 0     & 0     & t.lim &       & 0     & 0     & m.lim &       & 10    & 10    & 94.6  & 0.0   &       & 10    & 10    & 64.1  & 0.0 \\
			\midrule
			\multirow{6}[2]{*}{700} & 2     &       & -     & -     &       & -     & -     &       & -     & -     & -     &       & -     & -     & -     &       & 0     & 0     & t.lim & -     &       & 9     & 0     & t.lim & 0.8 \\
			& 4     &       & -     & -     &       & -     & -     &       & -     & -     & -     &       & -     & -     & -     &       & 6     & 0     & t.lim & 3.2   &       & 9     & 0     & t.lim & 3.3 \\
			& 6     &       & -     & -     &       & -     & -     &       & -     & -     & -     &       & -     & -     & -     &       & 6     & 0     & t.lim & 5.5   &       & 10    & 0     & t.lim & 5.5 \\
			& 8     &       & -     & -     &       & -     & -     &       & -     & -     & -     &       & -     & -     & -     &       & 10    & 0     & t.lim & 8.4   &       & 10    & 0     & t.lim & 8.4 \\
			& 16    &       & -     & -     &       & -     & -     &       & -     & -     & -     &       & -     & -     & -     &       & 10    & 0     & t.lim & 16.0  &       & 10    & 0     & t.lim & 16.0 \\
			& 30    &       & -     & -     &       & -     & -     &       & -     & -     & -     &       & -     & -     & -     &       & 10    & 0     & t.lim & 28.5  &       & 10    & 0     & t.lim & 28.4 \\
			\midrule
			\multirow{6}[2]{*}{1000} & 2     &       & -     & -     &       & -     & -     &       & -     & -     & -     &       & -     & -     & -     &       & 0     & 0     & m.lim & -     &       & 0     & 0     & m.lim & - \\
			& 4     &       & -     & -     &       & -     & -     &       & -     & -     & -     &       & -     & -     & -     &       & 0     & 0     & m.lim & -     &       & 0     & 0     & m.lim & - \\
			& 6     &       & -     & -     &       & -     & -     &       & -     & -     & -     &       & -     & -     & -     &       & 0     & 0     & m.lim & -     &       & 0     & 0     & m.lim & - \\
			& 8     &       & -     & -     &       & -     & -     &       & -     & -     & -     &       & -     & -     & -     &       & 7     & 0     & t.lim & 6.4   &       & 9     & 0     & t.lim & 6.4 \\
			& 16    &       & -     & -     &       & -     & -     &       & -     & -     & -     &       & -     & -     & -     &       & 10    & 0     & t.lim & 13.5  &       & 10    & 0     & t.lim & 13.5 \\
			& 30    &       & -     & -     &       & -     & -     &       & -     & -     & -     &       & -     & -     & -     &       & 10    & 0     & t.lim & 23.0  &       & 10    & 0     & t.lim & 23.0 \\
			\midrule
			total/avg &       &       & 10    & 289.3 &       & 10    & 288.4 &       & 90    & 68    & 181.9 &       & 53    & 53    & 102.4 &       & 229   & 136   & 165.9 & 4.8   &       & 247   & 160   & 149.2 & 4.2 \\
			\bottomrule
		\end{tabular}%
	}
	\label{tab:res_p100}
\end{table}%

The results show that EAF clearly outperforms all other methods on the attempted instances. It solves to proven optimality all instances with $p_{\max} = 20$ and all instances with $p_{\max} = 100$ and $n \leq 400$, so $440$ out of the $560$ tested instances. The version without enhancements, AF, solves all the instances with up to $400$ jobs for $p_{\max} = 20$ and  up to $100$ jobs for $p_{\max} = 100$, for a total of $348$ out of $560$ instances. This proves that the enhancements presented in Section \ref{sec:method} are very effective. For the unsolved instances, the gaps are extremely small, \rev{amounting to just a few units} per million on average.

The B\&P implemented to solve SC fails in solving some instances with just 100 jobs, which is coherent with the results in \cite{vandenAkkeretal1999}. The same happens for TI, which fails in solving instances with 100 jobs when $p_{\max} = 100$ and $m$ is small. The performance of both CIPQ and PTI is  very poor, but that could be explained by the fact that the two models were originally developed for the $R||\sum w_jC_j$, and hence do not exploit the symmetries induced by the identical machines.
The main advantage of CIPQ is related to its polynomial size, but it optimally solves only few instances with just $30$ jobs.
For what concerns PTI, it is worth mentioning that it was solved by means of a Benders decomposition method
in \cite{BulbulandSen2017}, as discussed in Section \ref{sec:PeemptionTI}. The authors provided us with the results obtained by their method on our $P||\sum w_jC_j$ instances. Unfortunately, these were worse on average that those obtained by the direct solution of the PTI model by means of the Gurobi solver. This, once more, can be imputed to the fact that their method was developed for the case of unrelated machines.

\rev{We conducted further experiments by running the models under larger time limits up to one hour. The summary of the results that we obtained is presented in Table \ref{tab:new_opts}, which shows, for each model and for each time limit, the number of optimal solutions found under the different time limits. Once more, it can be observed that AF and EAF clearly outperform the other methods. It can also be seen that the enhancements discussed in Section \ref{sec:method} are indeed effective, because EAF solved to proven optimality all instances with $p_{\max} = 20$ in less than $300$ seconds, whereas AF did not solve two of them within $1$ hour. Concerning the instances with $p_{\max} = 100$, by increasing the time limit from $300$ to $3600$ seconds, AF was able to solve $61$ instances more, including all the unsolved instances with  $n = 400$, whereas EAF solved $62$ more, including $48$ out of $60$ instances with $n=700$ that where not solved within $300$ seconds. In total, EAF was able to optimally solve $502$ out of $560$ instances.
}
\begin{table}
	\caption{\rev{Number of optimal solutions found under different time limits}}
	\centering
	\subfloat[\rev{$p_{\max} = 20$ (280 instances)}]{
		\centering
		\scriptsize
		\setlength{\tabcolsep}{0.75mm}
		\rev{
			\begin{tabular}{rrrrrrrrrrrrr}
				\toprule
				\multirow{2}{0.7cm}{time limit}  &       & \multicolumn{11}{c}{Formulation} \\
				\cmidrule{3-13}
				&       & CQIP  &       & PTI   &       & SC    &       & TI    &       & AF    &       & EAF \\
				\cmidrule{1-1}\cmidrule{3-3}\cmidrule{5-5}\cmidrule{7-7}\cmidrule{9-9}\cmidrule{11-11}\cmidrule{13-13}
				300   &       & 11    &       & 24    &       & 83    &       & 103   &       & 212   &       & 280 \\
				600   &       & 12    &       & 28    &       & 86    &       & 110   &       & 247   &       & 280 \\
				900   &       & 12    &       & 31    &       & 89    &       & 110   &       & 254   &       & 280 \\
				1200  &       & 12    &       & 32    &       & 90    &       & 110   &       & 260   &       & 280 \\
				1500  &       & 13    &       & 32    &       & 90    &       & 110   &       & 265   &       & 280 \\
				1800  &       & 14    &       & 32    &       & 90    &       & 110   &       & 268   &       & 280 \\
				2100  &       & 14    &       & 32    &       & 90    &       & 110   &       & 272   &       & 280 \\
				2400  &       & 15    &       & 32    &       & 92    &       & 110   &       & 275   &       & 280 \\
				2700  &       & 15    &       & 32    &       & 95    &       & 110   &       & 276   &       & 280 \\
				3000  &       & 15    &       & 32    &       & 98    &       & 110   &       & 276   &       & 280 \\
				3300  &       & 15    &       & 32    &       & 98    &       & 110   &       & 277   &       & 280 \\
				3600  &       & 15    &       & 32    &       & 99    &       & 110   &       & 278   &       & 280 \\
				\bottomrule
			\end{tabular}%
		}
		\label{tab:new_opts_pmax_20}%
	}\quad
	\subfloat[\rev{$p_{\max} = 100$ (280 instances)}]{
		\centering
		\scriptsize
		\setlength{\tabcolsep}{0.75mm}
		\rev{
			\begin{tabular}{rrrrrrrrrrrrr}
				\toprule
				\multirow{2}{0.7cm}{time limit}  &       & \multicolumn{11}{c}{Formulation} \\
				\cmidrule{3-13}
				&       & CQIP  &       & PTI   &       & SC    &       & TI    &       & AF    &       & EAF \\
				\cmidrule{1-1}\cmidrule{3-3}\cmidrule{5-5}\cmidrule{7-7}\cmidrule{9-9}\cmidrule{11-11}\cmidrule{13-13}
				300   &       & 10    &       & 10    &       & 68    &       & 53    &       & 136   &       & 160 \\
				600   &       & 11    &       & 10    &       & 70    &       & 60    &       & 150   &       & 169 \\
				900   &       & 12    &       & 10    &       & 73    &       & 61    &       & 168   &       & 178 \\
				1200  &       & 13    &       & 10    &       & 73    &       & 61    &       & 173   &       & 180 \\
				1500  &       & 13    &       & 10    &       & 74    &       & 62    &       & 179   &       & 192 \\
				1800  &       & 14    &       & 10    &       & 76    &       & 64    &       & 180   &       & 200 \\
				2100  &       & 15    &       & 10    &       & 77    &       & 65    &       & 183   &       & 206 \\
				2400  &       & 15    &       & 10    &       & 77    &       & 66    &       & 184   &       & 210 \\
				2700  &       & 15    &       & 10    &       & 78    &       & 68    &       & 188   &       & 213 \\
				3000  &       & 15    &       & 10    &       & 78    &       & 68    &       & 192   &       & 216 \\
				3300  &       & 15    &       & 10    &       & 79    &       & 68    &       & 194   &       & 219 \\
				3600  &       & 15    &       & 10    &       & 80    &       & 69    &       & 197   &       & 222 \\
				\bottomrule
			\end{tabular}%
		}
		\label{tab:new_opts_pmax_100}%
	}
	\label{tab:new_opts}%
\end{table}			

Time indexed formulations, such as PTI, TI, AF and EAF, have a pseudo-polynomial size, and hence may require large amounts of memory when the time horizon grows. It can be noticed indeed that all of them do not solve some instances due to memory limit. In this sense, TI starts to run out of memory for instances with only $100$ jobs, PTI is able to deal with instances with up to $400$ jobs (although cannot optimally solve them due to time limit), whereas AF and EAF deal with instances with up to $1000$ jobs.

This better behavior can be justified by the difference in the number of variables required by models. This fact is graphically highlighted in Figure \ref{fig:var_reduction}, which presents the average number of variables in thousands, per group of instances having the same $n$. The reduction of AF and even more EAF with respect to the plain TI model is evident.
{This effect can be observed more in details in Table \ref{tab:res_var}, which reports, for each group of instances having same $n$ and $m$, the number of variables in thousands, var(thousands), and the percentage reduction of variables from one model to the next, {red(\%)}.}
\begin{figure}[!h]
	\centering
	\subfloat[$p_{\max}=20$]{\includegraphics[width=0.49\textwidth]{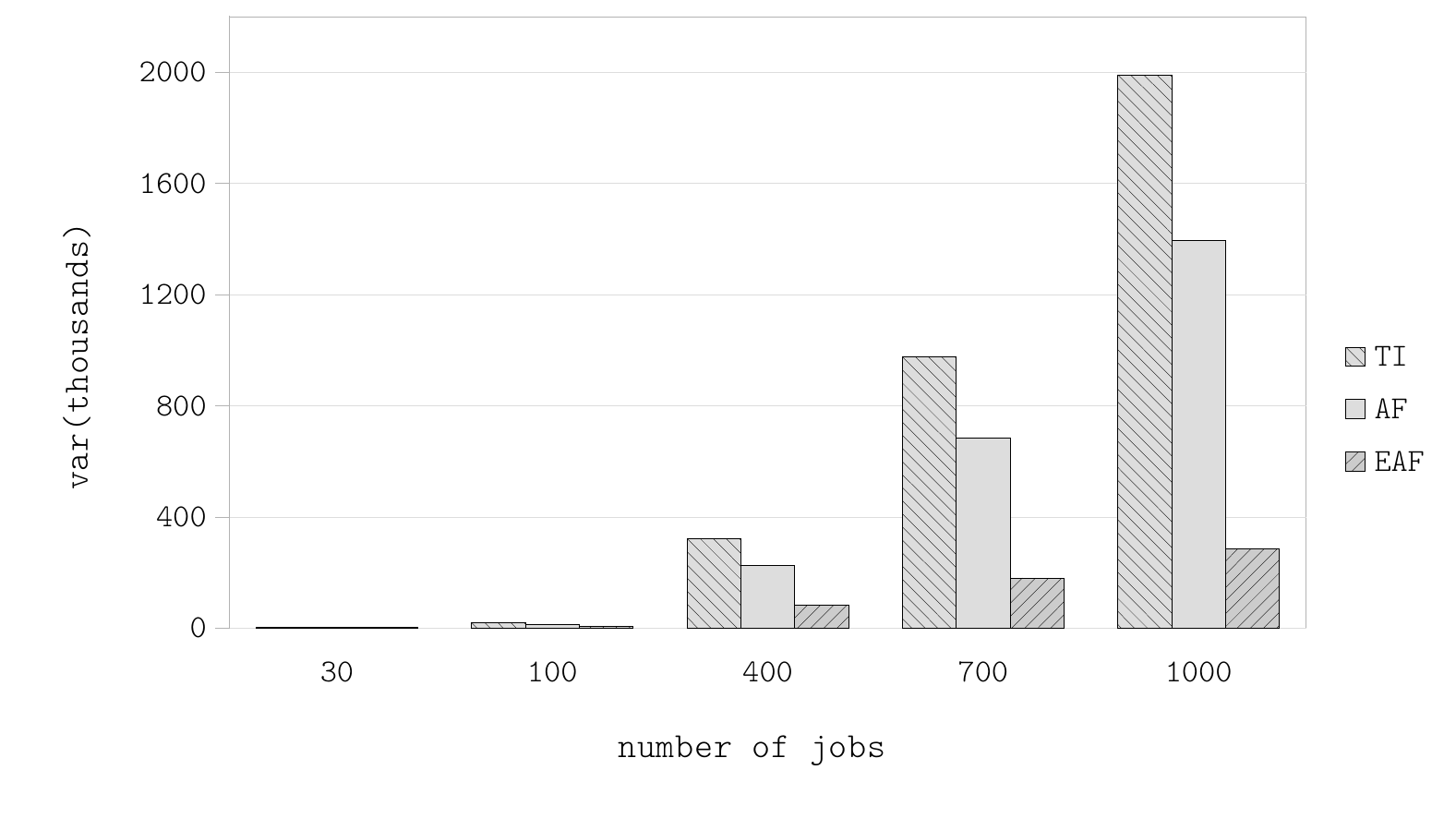}\label{fig:var_pmax20}}
	\hfill
	\subfloat[$p_{\max}=100$]{\includegraphics[width=0.49\textwidth]{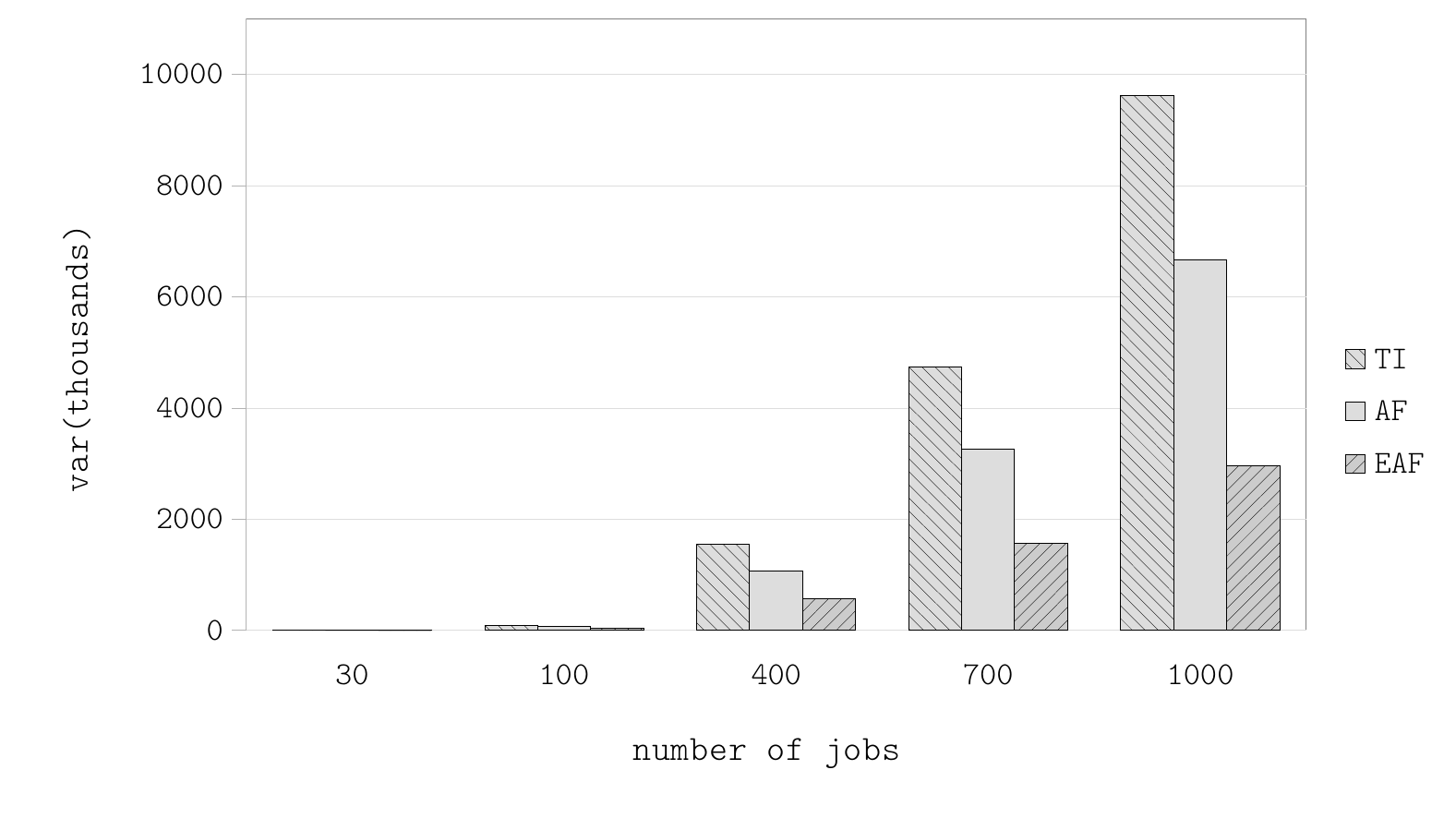}\label{fig:var_pmax100}}
	\caption{Impact of the proposed enhancements on the number of variables}
	\label{fig:var_reduction}
\end{figure}			
From Table \ref{tab:res_var}, it is possible to notice that for the instances with $p_{\max}=20$ reductions of about 30\% are obtained by AF over TI, and of even 80\% by EAF over AF. This allows to move from an average of about $7 \times 10^5$ TI variables to just $1 \times 10^5$ EAF variables. Concerning the instances with $p_{\max}=100$, EAF formulates the problem using, on average, $50\%$ less variables than AF, which in turns uses 30\% less variables than TI. The results  also indicates that, as the instance grows, the reductions become \rev{larger}.

\begin{table}[!h]
	\centering
	\caption{Variables (in thousands) required by the main pseudo-polynomial formulations}
	\scriptsize
	\setlength{\tabcolsep}{0.5mm}
	\begin{tabular}{lrrrrrrrrrrrrrrrrrrrr}
		\toprule
		\multicolumn{1}{c}{\multirow{3}[5]{*}{$n$}} & \multicolumn{1}{c}{\multirow{3}[5]{*}{$m$}} & \multicolumn{9}{c}{$p_{\max}=20$} &  & \multicolumn{9}{c}{$p_{\max}=100$} \\
		\cmidrule(){3-11} \cmidrule(){13-21}
		&  & \multicolumn{5}{c}{{var(thousands)}} &  & \multicolumn{3}{c}{red(\%)} &  & \multicolumn{5}{c}{{var(thousands)}} &  & \multicolumn{3}{c}{red(\%)} \\
		\cmidrule(){3-7} \cmidrule(){9-11} \cmidrule(){13-17} \cmidrule(){19-21}
		&  & \multicolumn{1}{r}{TI} &  & \multicolumn{1}{r}{AF} &  & \multicolumn{1}{r}{EAF} &  & \multicolumn{1}{r}{AF {\emph vs} TI} &  & \multicolumn{1}{r}{EAF {\emph vs} AF} &  & \multicolumn{1}{r}{TI} &  & \multicolumn{1}{r}{AF} &  & \multicolumn{1}{r}{EAF} &  & \multicolumn{1}{r}{AF {\emph vs} TI} &  & \multicolumn{1}{r}{EAF {\emph vs} AF} \\
		\cline{3-3} \cline{5-5} \cline{7-7} \cline{9-9} \cline{11-11} \cline{13-13} \cline{15-15} \cline{17-17} \cline{19-19} \cline{21-21}
		\hline
		\multirow{4}{*}{30} & 2 & 4.8 &  & 3.0 &  & 1.8 &  & 37.5 &  & 41.0 &  & 20.8 &  & 11.8 &  & 7.6 &  & 43.2 &  & 35.5 \\
		& 4 & 2.6 &  & 1.8 &  & 1.3 &  & 29.2 &  & 29.4 &  & 11.9 &  & 7.8 &  & 5.7 &  & 34.4 &  & 27.4 \\
		& 6 & 1.8 &  & 1.3 &  & 1.0 &  & 24.8 &  & 28.6 &  & 8.4 &  & 5.8 &  & 4.5 &  & 31.3 &  & 21.2 \\
		& 8 & 1.4 &  & 1.1 &  & 0.8 &  & 22.2 &  & 24.8 &  & 6.7 &  & 4.6 &  & 3.7 &  & 32.4 &  & 19.6 \\
		\hline
		\multirow{6}{*}{100} & 2 & 51.9 &  & 31.4 &  & 17.0 &  & 39.5 &  & 46.0 &  & 253.9 &  & 150.7 &  & 89.7 &  & 40.7 &  & 40.5 \\
		& 4 & 27.6 &  & 20.1 &  & 10.8 &  & 27.0 &  & 46.5 &  & 128.0 &  & 90.5 &  & 58.3 &  & 29.2 &  & 35.6 \\
		& 6 & 17.7 &  & 13.6 &  & 7.6 &  & 23.5 &  & 44.2 &  & 90.1 &  & 69.8 &  & 44.2 &  & 22.5 &  & 36.6 \\
		& 8 & 13.7 &  & 11.1 &  & 6.7 &  & 19.3 &  & 39.9 &  & 66.4 &  & 51.9 &  & 33.7 &  & 21.9 &  & 35.0 \\
		& 16 & 7.5 &  & 6.4 &  & 4.1 &  & 13.9 &  & 36.3 &  & 36.1 &  & 29.7 &  & 21.8 &  & 17.8 &  & 26.5 \\
		& 30 & 4.4 &  & 3.8 &  & 2.7 &  & 12.1 &  & 30.4 &  & 21.7 &  & 18.7 &  & 14.1 &  & 13.8 &  & 24.4 \\
		\hline
		\multirow{6}{*}{400} & 2 & 843.2 &  & 509.4 &  & 185.7 &  & 39.6 &  & 63.5 &  & 4088.9 &  & 2409.8 &  & 1291.0 &  & 41.1 &  & 46.4 \\
		& 4 & 424.3 &  & 309.1 &  & 108.0 &  & 27.2 &  & 65.1 &  & 2017.3 &  & 1418.7 &  & 726.3 &  & 29.7 &  & 48.8 \\
		& 6 & 282.6 &  & 220.6 &  & 75.8 &  & 21.9 &  & 65.7 &  & 1360.9 &  & 1034.0 &  & 530.2 &  & 24.0 &  & 48.7 \\
		& 8 & 213.2 &  & 172.6 &  & 60.8 &  & 19.0 &  & 64.8 &  & 1040.4 &  & 827.6 &  & 429.4 &  & 20.5 &  & 48.1 \\
		& 16 & 106.9 &  & 92.6 &  & 34.4 &  & 13.4 &  & 62.9 &  & 533.0 &  & 456.4 &  & 247.8 &  & 14.4 &  & 45.7 \\
		& 30 & 60.2 &  & 54.6 &  & 22.4 &  & 9.4 &  & 59.0 &  & 290.0 &  & 256.2 &  & 151.4 &  & 11.7 &  & 40.9 \\
		\hline
		\multirow{6}{*}{700} & 2 & 2539.6 &  & 1520.0 &  & 418.7 &  & 40.1 &  & 72.5 &  & 12613.7 &  & 7437.6 &  & 3720.1 &  & 41.0 &  & 50.0 \\
		& 4 & 1282.9 &  & 930.0 &  & 240.6 &  & 27.5 &  & 74.1 &  & 6114.2 &  & 4336.5 &  & 2073.8 &  & 29.1 &  & 52.2 \\
		& 6 & 874.3 &  & 680.5 &  & 170.7 &  & 22.2 &  & 74.9 &  & 4161.0 &  & 3191.1 &  & 1464.8 &  & 23.3 &  & 54.1 \\
		& 8 & 654.0 &  & 528.4 &  & 132.2 &  & 19.2 &  & 75.0 &  & 3148.3 &  & 2514.4 &  & 1140.7 &  & 20.1 &  & 54.6 \\
		& 16 & 326.9 &  & 284.2 &  & 75.2 &  & 13.1 &  & 73.5 &  & 1582.9 &  & 1355.0 &  & 639.9 &  & 14.4 &  & 52.8 \\
		& 30 & 177.4 &  & 160.9 &  & 45.7 &  & 9.3 &  & 71.6 &  & 859.6 &  & 766.9 &  & 386.1 &  & 10.8 &  & 49.6 \\
		\hline
		\multirow{6}{*}{1000} & 2 & 5216.2 &  & 3129.4 &  & 671.1 &  & 40.0 &  & 78.6 &  & 25320.3 &  & 14942.6 &  & 7020.6 &  & 41.0 &  & 53.0 \\
		& 4 & 2626.0 &  & 1902.1 &  & 378.1 &  & 27.6 &  & 80.1 &  & 12725.0 &  & 9083.5 &  & 3980.6 &  & 28.6 &  & 56.2 \\
		& 6 & 1757.3 &  & 1369.4 &  & 267.0 &  & 22.1 &  & 80.5 &  & 8445.0 &  & 6673.8 &  & 2853.3 &  & 21.0 &  & 57.2 \\
		& 8 & 1314.7 &  & 1063.1 &  & 207.7 &  & 19.1 &  & 80.5 &  & 6327.7 &  & 5051.4 &  & 2090.2 &  & 20.2 &  & 58.6 \\
		& 16 & 664.4 &  & 579.4 &  & 115.4 &  & 12.8 &  & 80.1 &  & 3212.8 &  & 2742.8 &  & 1154.8 &  & 14.6 &  & 57.9 \\
		& 30 & 359.4 &  & 326.5 &  & 69.6 &  & 9.2 &  & 78.7 &  & 1711.3 &  & 1531.0 &  & 688.2 &  & 10.5 &  & 55.1 \\
		\hline
		\multicolumn{2}{c}{total/avg} & 709.2 &  & 497.4 &  & 119.0 &  & 29.9 &  & 76.1 &  & 3435.6 &  & 2365.3 &  & 1098.6 &  & 31.2 &  & 53.6 \\
		\hline
	\end{tabular}
	\label{tab:res_var}
\end{table}		

\subsection{\rev{Computational results on benchmark set 2}}\label{sec:newResults}

\rev{Very recently, \cite{KowalczykandLeus2018} improved the branch-and-price method of \cite{vandenAkkeretal1999} by introducing the use of stabilization techniques, generic branching, and a {\em zero-suppressed binary decision diagram} (ZDDs) for solving the pricing subproblem. By combining these techniques, they devised three main methods: the first, named VHV-DP, uses the branching scheme and the DP in \cite{vandenAkkeretal1999}, but includes stabilization; the second, VHV-ZDD, also uses the branching scheme of \cite{vandenAkkeretal1999} and stabilization, but solves the pricing subproblem with the the ZDDs technique; the third, RF-ZDD, differs from the second by the fact that the branching decisions follow the generic scheme of \cite{RyanandFoster1981}. The three methods were computationally tested on the benchmark set 2 described in Section \ref{subsec:instances}.}	

\rev{We performed experiments by running for 600 seconds our best mathematical formulation, namely, EAF, on the same instances and compared our results with those obtained by \cite{KowalczykandLeus2018}. As in the previous section, for instances involving more than 100 jobs EAF used the ILS of Section \ref{subsec:UB} to obtain an initial solution.
According to the single thread results in \url{https://www.cpubenchmark.net/}, the processor used by \cite{KowalczykandLeus2018}, an Intel Core i7-3770 $3.40$ GHz,  is about $1.9$ times faster than our Intel Xeon E5530 $2.40$ GHz processor.}

\rev{The results that we obtained are presented in Tables \ref{tab:resnew_Class_I_II}, \ref{tab:resnew_Class_III_IV} and \ref{tab:resnew_Class_V_VI}. Columns {\#opt} and {t(s)} represent, for each method, class and group of $20$ instances, the number of instances solved to proven optimality and the average computational time, respectively.}	
\rev{Concerning RF-ZDD, VHV-ZDD and VHV-DP, the t(s) values report the average times in \cite{KowalczykandLeus2018} (i.e., not multiplied by 1.9), but now include in the computation of the average the entire time limit value ($3600$ seconds) for those instances that were not solved to proven optimality.}
\rev{The three methods by \cite{KowalczykandLeus2018} obtained good results but could not solve all instances to proven optimality. In general, VHV-DP was able to solve more instances than RF-ZDD and VHV-ZDD, whereas RF-ZDD proved to be less time consuming on average. We can observe from the tables that our ILS+EAF algorithm generally outperformed the other methods. Indeed, it could solve all $2400$ instances within the time limit and usually very quickly (only 10 instances required more than 300 seconds, and the slowest case required 395 seconds). Instances with up to 100 jobs were solved in a matter of seconds. Instances with 150 jobs required longer times, also due to the 100 seconds allowed for the ILS execution. The instances from class VI represent the most challenging testbed for ILS+EAF, nevertheless they were all solved in about 50 seconds on average.}

\begin{table}[htbp]
	\centering
	\caption{\rev{Results on set 2 instances -- Classes I and II}}
	\scriptsize
	\setlength{\tabcolsep}{0.5mm}
	\rev{
		\begin{tabular}{llrrrrrrrrrrrrrrrrrrrrrrrr}
			\toprule
			\multirow{3}[6]{*}{$n$} & \multicolumn{1}{r}{\multirow{3}[6]{*}{$m$}} &       & \multicolumn{11}{c}{Class I} &       & \multicolumn{11}{c}{Class II} \\
			\cmidrule{4-14}\cmidrule{16-26}
			&       &       & \multicolumn{2}{c}{RF-ZDD} &       & \multicolumn{2}{c}{VHV-ZDD} &       & \multicolumn{2}{c}{VHV-DP} &       & \multicolumn{2}{c}{ILS+EAF} &       & \multicolumn{2}{c}{RF-ZDD} &       & \multicolumn{2}{c}{VHV-ZDD} &       & \multicolumn{2}{c}{VHV-DP} &       & \multicolumn{2}{c}{ILS+EAF} \\
			\cmidrule{4-5}\cmidrule{7-8}\cmidrule{10-11}\cmidrule{13-14}\cmidrule{16-17}\cmidrule{19-20}\cmidrule{22-23}\cmidrule{25-26}
			&       &       & \#opt & t(s)  &       & \#opt & t(s)  &       & \#opt & t(s)  &       & \#opt & t(s)  &       & \#opt & t(s)   &       & \#opt & t(s)   &       & \#opt & t(s)   &       & \#opt & t(s) \\
			\midrule
			\multirow{5}[2]{*}{20} & \multicolumn{1}{r}{3} &       & 20    & 0.0   &       & 20    & 0.0   &       & 20    & 0.0   &       & 20    & 0.0   &       & 20    & 0.0   &       & 20    & 0.0   &       & 20    & 0.0   &       & 20    & 0.1 \\
			& \multicolumn{1}{r}{5} &       & 20    & 0.0   &       & 20    & 0.0   &       & 20    & 0.0   &       & 20    & 0.0   &       & 20    & 0.0   &       & 20    & 0.0   &       & 20    & 0.0   &       & 20    & 0.1 \\
			& \multicolumn{1}{r}{8} &       & 20    & 0.0   &       & 20    & 0.0   &       & 20    & 0.0   &       & 20    & 0.0   &       & 20    & 0.0   &       & 20    & 0.0   &       & 20    & 0.0   &       & 20    & 0.0 \\
			& \multicolumn{1}{r}{10} &       & 20    & 0.0   &       & 20    & 0.0   &       & 20    & 0.0   &       & 20    & 0.0   &       & 20    & 0.0   &       & 20    & 0.0   &       & 20    & 0.0   &       & 20    & 0.0 \\
			& \multicolumn{1}{r}{12} &       & 20    & 0.0   &       & 20    & 0.0   &       & 20    & 0.0   &       & 20    & 0.0   &       & 20    & 0.0   &       & 20    & 0.0   &       & 20    & 0.0   &       & 20    & 0.0 \\
			\midrule
			\multirow{5}[2]{*}{50} & \multicolumn{1}{r}{3} &       & 20    & 1.9   &       & 20    & 1.2   &       & 20    & 1.1   &       & 20    & 0.0   &       & 20    & 0.7   &       & 20    & 0.6   &       & 20    & 0.7   &       & 20    & 0.9 \\
			& \multicolumn{1}{r}{5} &       & 20    & 0.9   &       & 20    & 0.5   &       & 20    & 0.6   &       & 20    & 0.0   &       & 20    & 0.6   &       & 19    & 180.6 &       & 20    & 0.6   &       & 20    & 0.7 \\
			& \multicolumn{1}{r}{8} &       & 20    & 0.4   &       & 20    & 0.2   &       & 20    & 0.2   &       & 20    & 0.0   &       & 20    & 0.4   &       & 19    & 180.3 &       & 19    & 180.3 &       & 20    & 0.6 \\
			& \multicolumn{1}{r}{10} &       & 20    & 0.3   &       & 20    & 0.1   &       & 20    & 0.1   &       & 20    & 0.0   &       & 20    & 0.3   &       & 20    & 0.2   &       & 20    & 0.2   &       & 20    & 0.5 \\
			& \multicolumn{1}{r}{12} &       & 20    & 0.1   &       & 20    & 0.1   &       & 20    & 0.1   &       & 20    & 0.0   &       & 20    & 0.1   &       & 20    & 0.1   &       & 20    & 0.1   &       & 20    & 0.3 \\
			\midrule
			\multirow{5}[2]{*}{100} & \multicolumn{1}{r}{3} &       & 20    & 62.7  &       & 20    & 59.8  &       & 20    & 64.5  &       & 20    & 0.2   &       & 20    & 23.0  &       & 20    & 37.6  &       & 20    & 34.0  &       & 20    & 9.4 \\
			& \multicolumn{1}{r}{5} &       & 19    & 210.3 &       & 20    & 31.5  &       & 20    & 32.5  &       & 20    & 0.2   &       & 20    & 56.0  &       & 19    & 286.4 &       & 18    & 415.1 &       & 20    & 8.4 \\
			& \multicolumn{1}{r}{8} &       & 19    & 194.4 &       & 20    & 7.0   &       & 20    & 7.1   &       & 20    & 0.1   &       & 20    & 53.2  &       & 20    & 133.6 &       & 20    & 101.0 &       & 20    & 5.0 \\
			& \multicolumn{1}{r}{10} &       & 19    & 189.3 &       & 20    & 3.7   &       & 20    & 3.8   &       & 20    & 0.1   &       & 20    & 41.4  &       & 20    & 18.1  &       & 20    & 17.9  &       & 20    & 4.6 \\
			& \multicolumn{1}{r}{12} &       & 20    & 7.3   &       & 20    & 2.2   &       & 20    & 2.1   &       & 20    & 0.1   &       & 20    & 50.8  &       & 20    & 74.8  &       & 20    & 75.0  &       & 20    & 2.9 \\
			\midrule
			\multirow{5}[2]{*}{150} & \multicolumn{1}{r}{3} &       & 18    & 873.8 &       & 20    & 1162.0 &       & 20    & 1123.4 &       & 20    & 100.4 &       & 20    & 362.3 &       & 15    & 1590.9 &       & 17    & 1147.7 &       & 20    & 127.0 \\
			& \multicolumn{1}{r}{5} &       & 20    & 323.8 &       & 20    & 813.9 &       & 20    & 813.6 &       & 20    & 100.2 &       & 16    & 1438.2 &       & 16    & 1485.2 &       & 15    & 1656.8 &       & 20    & 137.7 \\
			& \multicolumn{1}{r}{8} &       & 18    & 475.6 &       & 20    & 275.9 &       & 20    & 261.5 &       & 20    & 100.2 &       & 17    & 1035.5 &       & 16    & 1641.0 &       & 16    & 1624.3 &       & 20    & 126.0 \\
			& \multicolumn{1}{r}{10} &       & 20    & 85.5  &       & 20    & 117.4 &       & 20    & 132.1 &       & 20    & 100.1 &       & 16    & 1273.3 &       & 16    & 1183.7 &       & 17    & 1168.7 &       & 20    & 110.0 \\
			& \multicolumn{1}{r}{12} &       & 20    & 52.8  &       & 20    & 53.0  &       & 20    & 54.8  &       & 20    & 100.1 &       & 18    & 599.4 &       & 20    & 348.9 &       & 20    & 356.4 &       & 20    & 105.8 \\
			\midrule
			\multicolumn{2}{l}{total/avg} &       & 393   & 124.0 &       & 400   & 126.4 &       & 400   & 124.9 &       & 400   & 25.1  &       & 387   & 246.8 &       & 380   & 358.1 &       & 382   & 338.9 &       & 400   & 32.0 \\
			\bottomrule
		\end{tabular}%
	}
	\label{tab:resnew_Class_I_II}%
\end{table}%

\begin{table}[htbp]
	\centering
	\caption{\rev{Results on set 2 instances -- Classes III and IV}}
	\scriptsize
	\setlength{\tabcolsep}{0.5mm}
	\rev{
		\begin{tabular}{llrrrrrrrrrrrrrrrrrrrrrrrr}
			\toprule
			\multirow{3}[6]{*}{$n$} & \multicolumn{1}{r}{\multirow{3}[6]{*}{$m$}} &       & \multicolumn{11}{c}{Class III} &       & \multicolumn{11}{c}{Class IV} \\
			\cmidrule{4-14}\cmidrule{16-26}
			&       &       & \multicolumn{2}{c}{RF-ZDD} &       & \multicolumn{2}{c}{VHV-ZDD} &       & \multicolumn{2}{c}{VHV-DP} &       & \multicolumn{2}{c}{ILS+EAF} &       & \multicolumn{2}{c}{RF-ZDD} &       & \multicolumn{2}{c}{VHV-ZDD} &       & \multicolumn{2}{c}{VHV-DP} &       & \multicolumn{2}{c}{ILS+EAF} \\
			\cmidrule{4-5}\cmidrule{7-8}\cmidrule{10-11}\cmidrule{13-14}\cmidrule{16-17}\cmidrule{19-20}\cmidrule{22-23}\cmidrule{25-26}
			&       &       & \#opt & t(s)  &       & \#opt & t(s)  &       & \#opt & t(s)  &       & \#opt & t(s)  &       & \#opt & t(s)   &       & \#opt & t(s)   &       & \#opt & t(s)   &       & \#opt & t(s) \\
			\midrule
			\multirow{5}[2]{*}{20} & \multicolumn{1}{r}{3} &       & 20    & 0.0   &       & 20    & 0.0   &       & 20    & 0.0   &       & 20    & 0.0   &       & 20    & 0.0   &       & 20    & 0.0   &       & 20    & 0.0   &       & 20    & 0.0 \\
			& \multicolumn{1}{r}{5} &       & 20    & 0.0   &       & 20    & 0.0   &       & 20    & 0.0   &       & 20    & 0.0   &       & 20    & 0.0   &       & 20    & 0.0   &       & 20    & 0.0   &       & 20    & 0.0 \\
			& \multicolumn{1}{r}{8} &       & 20    & 0.0   &       & 20    & 0.0   &       & 20    & 0.0   &       & 20    & 0.0   &       & 20    & 0.0   &       & 20    & 0.0   &       & 20    & 0.0   &       & 20    & 0.0 \\
			& \multicolumn{1}{r}{10} &       & 20    & 0.0   &       & 20    & 0.0   &       & 20    & 0.0   &       & 20    & 0.0   &       & 20    & 0.0   &       & 20    & 0.0   &       & 20    & 0.0   &       & 20    & 0.0 \\
			& \multicolumn{1}{r}{12} &       & 20    & 0.0   &       & 20    & 0.0   &       & 20    & 0.0   &       & 20    & 0.0   &       & 20    & 0.0   &       & 20    & 0.0   &       & 20    & 0.0   &       & 20    & 0.0 \\
			\midrule
			\multirow{5}[2]{*}{50} & \multicolumn{1}{r}{3} &       & 20    & 2.1   &       & 20    & 4.0   &       & 20    & 4.2   &       & 20    & 0.1   &       & 20    & 0.7   &       & 20    & 0.7   &       & 20    & 1.0   &       & 20    & 0.5 \\
			& \multicolumn{1}{r}{5} &       & 20    & 1.1   &       & 20    & 1.9   &       & 20    & 2.0   &       & 20    & 0.1   &       & 20    & 1.2   &       & 20    & 2.1   &       & 20    & 2.6   &       & 20    & 0.2 \\
			& \multicolumn{1}{r}{8} &       & 20    & 0.6   &       & 20    & 0.6   &       & 20    & 0.6   &       & 20    & 0.1   &       & 20    & 0.4   &       & 20    & 0.5   &       & 20    & 0.6   &       & 20    & 0.1 \\
			& \multicolumn{1}{r}{10} &       & 20    & 0.4   &       & 20    & 0.4   &       & 20    & 0.5   &       & 20    & 0.1   &       & 20    & 0.4   &       & 20    & 0.5   &       & 20    & 0.6   &       & 20    & 0.1 \\
			& \multicolumn{1}{r}{12} &       & 20    & 0.2   &       & 20    & 0.2   &       & 20    & 0.2   &       & 20    & 0.1   &       & 20    & 0.1   &       & 20    & 0.1   &       & 20    & 0.2   &       & 20    & 0.1 \\
			\midrule
			\multirow{5}[2]{*}{100} & \multicolumn{1}{r}{3} &       & 20    & 41.5  &       & 20    & 158.8 &       & 20    & 162.3 &       & 20    & 1.0   &       & 20    & 69.7  &       & 20    & 186.7 &       & 20    & 169.1 &       & 20    & 17.5 \\
			& \multicolumn{1}{r}{5} &       & 20    & 22.4  &       & 20    & 109.5 &       & 20    & 103.7 &       & 20    & 0.8   &       & 20    & 33.3  &       & 20    & 147.8 &       & 20    & 159.3 &       & 20    & 11.6 \\
			& \multicolumn{1}{r}{8} &       & 20    & 10.7  &       & 20    & 40.8  &       & 20    & 39.3  &       & 20    & 0.4   &       & 20    & 14.8  &       & 20    & 44.3  &       & 20    & 47.1  &       & 20    & 1.2 \\
			& \multicolumn{1}{r}{10} &       & 20    & 7.1   &       & 20    & 20.5  &       & 20    & 20.0  &       & 20    & 0.3   &       & 20    & 9.7   &       & 20    & 31.5  &       & 20    & 34.8  &       & 20    & 1.0 \\
			& \multicolumn{1}{r}{12} &       & 20    & 6.1   &       & 20    & 12.2  &       & 20    & 12.4  &       & 20    & 0.2   &       & 19    & 186.9 &       & 20    & 18.4  &       & 20    & 20.4  &       & 20    & 0.4 \\
			\midrule
			\multirow{5}[2]{*}{150} & \multicolumn{1}{r}{3} &       & 20    & 430.2 &       & 20    & 1508.3 &       & 20    & 1517.5 &       & 20    & 100.6 &       & 20    & 973.5 &       & 16    & 2529.2 &       & 20    & 2108.4 &       & 20    & 108.7 \\
			& \multicolumn{1}{r}{5} &       & 20    & 166.6 &       & 20    & 965.5 &       & 20    & 1063.9 &       & 20    & 100.5 &       & 20    & 383.7 &       & 20    & 1281.3 &       & 20    & 1239.4 &       & 20    & 104.0 \\
			& \multicolumn{1}{r}{8} &       & 20    & 71.9  &       & 20    & 504.8 &       & 20    & 517.1 &       & 20    & 100.3 &       & 20    & 164.3 &       & 20    & 612.9 &       & 20    & 648.8 &       & 20    & 101.9 \\
			& \multicolumn{1}{r}{10} &       & 20    & 51.1  &       & 20    & 360.9 &       & 20    & 361.6 &       & 20    & 100.4 &       & 20    & 93.3  &       & 20    & 358.6 &       & 20    & 429.6 &       & 20    & 101.8 \\
			& \multicolumn{1}{r}{12} &       & 20    & 36.8  &       & 20    & 224.4 &       & 20    & 245.2 &       & 20    & 100.2 &       & 20    & 69.0  &       & 19    & 458.2 &       & 20    & 287.1 &       & 20    & 101.0 \\
			\midrule
			\multicolumn{2}{l}{total/avg} &       & 400   & 42.4  &       & 400   & 195.6 &       & 400   & 202.5 &       & 400   & 25.3  &       & 399   & 100.1 &       & 395   & 283.6 &       & 400   & 257.4 &       & 400   & 27.5 \\
			\bottomrule
		\end{tabular}%
	}
	\label{tab:resnew_Class_III_IV}%
\end{table}%

\begin{table}[htbp]
	\centering
	\caption{\rev{Results on set 2 instances -- Classes V and VI}}
	\scriptsize
	\setlength{\tabcolsep}{0.5mm}
	\rev{
		\begin{tabular}{llrrrrrrrrrrrrrrrrrrrrrrrr}
			\toprule
			\multirow{3}[6]{*}{$n$} & \multicolumn{1}{r}{\multirow{3}[6]{*}{$m$}} &       & \multicolumn{11}{c}{Class V} &       & \multicolumn{11}{c}{Class VI} \\
			\cmidrule{4-14}\cmidrule{16-26}
			&       &       & \multicolumn{2}{c}{RF-ZDD} &       & \multicolumn{2}{c}{VHV-ZDD} &       & \multicolumn{2}{c}{VHV-DP} &       & \multicolumn{2}{c}{ILS+EAF} &       & \multicolumn{2}{c}{RF-ZDD} &       & \multicolumn{2}{c}{VHV-ZDD} &       & \multicolumn{2}{c}{VHV-DP} &       & \multicolumn{2}{c}{ILS+EAF} \\
			\cmidrule{4-5}\cmidrule{7-8}\cmidrule{10-11}\cmidrule{13-14}\cmidrule{16-17}\cmidrule{19-20}\cmidrule{22-23}\cmidrule{25-26}
			&       &       & \#opt & t(s)  &       & \#opt & t(s)  &       & \#opt & t(s)  &       & \#opt & t(s)  &       & \#opt & t(s)   &       & \#opt & t(s)   &       & \#opt & t(s)   &       & \#opt & t(s) \\
			\midrule
			\multirow{5}[2]{*}{20} & \multicolumn{1}{r}{3} &       & 20    & 0.0   &       & 20    & 0.0   &       & 20    & 0.0   &       & 20    & 0.0   &       & 20    & 0.0   &       & 20    & 0.0   &       & 20    & 0.0   &       & 20    & 0.0 \\
			& \multicolumn{1}{r}{5} &       & 20    & 0.0   &       & 20    & 0.0   &       & 20    & 0.0   &       & 20    & 0.0   &       & 20    & 0.0   &       & 20    & 0.0   &       & 20    & 0.0   &       & 20    & 0.0 \\
			& \multicolumn{1}{r}{8} &       & 20    & 0.0   &       & 20    & 0.0   &       & 20    & 0.0   &       & 20    & 0.0   &       & 20    & 0.0   &       & 20    & 0.0   &       & 20    & 0.0   &       & 20    & 0.0 \\
			& \multicolumn{1}{r}{10} &       & 20    & 0.0   &       & 20    & 0.0   &       & 20    & 0.0   &       & 20    & 0.0   &       & 20    & 0.0   &       & 20    & 0.0   &       & 20    & 0.0   &       & 20    & 0.0 \\
			& \multicolumn{1}{r}{12} &       & 20    & 0.0   &       & 20    & 0.0   &       & 20    & 0.0   &       & 20    & 0.0   &       & 20    & 0.0   &       & 20    & 0.0   &       & 20    & 0.0   &       & 20    & 0.0 \\
			\midrule
			\multirow{5}[2]{*}{50} & \multicolumn{1}{r}{3} &       & 20    & 1.5   &       & 20    & 1.8   &       & 20    & 1.9   &       & 20    & 1.2   &       & 20    & 2.3   &       & 20    & 3.4   &       & 20    & 2.7   &       & 20    & 2.3 \\
			& \multicolumn{1}{r}{5} &       & 20    & 1.5   &       & 20    & 2.3   &       & 20    & 3.0   &       & 20    & 0.7   &       & 20    & 1.4   &       & 20    & 1.8   &       & 20    & 1.8   &       & 20    & 1.2 \\
			& \multicolumn{1}{r}{8} &       & 20    & 0.3   &       & 20    & 0.6   &       & 20    & 0.7   &       & 20    & 0.2   &       & 20    & 1.0   &       & 20    & 0.7   &       & 20    & 0.9   &       & 20    & 0.8 \\
			& \multicolumn{1}{r}{10} &       & 20    & 0.4   &       & 20    & 0.4   &       & 20    & 0.6   &       & 20    & 0.1   &       & 20    & 1.1   &       & 20    & 0.7   &       & 20    & 0.7   &       & 20    & 0.4 \\
			& \multicolumn{1}{r}{12} &       & 20    & 0.2   &       & 20    & 0.2   &       & 20    & 0.2   &       & 20    & 0.1   &       & 20    & 1.1   &       & 20    & 1.0   &       & 20    & 0.8   &       & 20    & 0.4 \\
			\midrule
			\multirow{5}[2]{*}{100} & \multicolumn{1}{r}{3} &       & 19    & 314.0 &       & 20    & 359.5 &       & 20    & 290.4 &       & 20    & 46.8  &       & 20    & 78.7  &       & 20    & 242.5 &       & 20    & 200.1 &       & 20    & 73.0 \\
			& \multicolumn{1}{r}{5} &       & 20    & 55.0  &       & 20    & 180.4 &       & 20    & 200.7 &       & 20    & 17.2  &       & 20    & 37.8  &       & 20    & 126.7 &       & 20    & 112.8 &       & 20    & 22.8 \\
			& \multicolumn{1}{r}{8} &       & 19    & 197.8 &       & 18    & 497.5 &       & 18    & 408.9 &       & 20    & 2.3   &       & 20    & 25.2  &       & 20    & 58.0  &       & 20    & 44.6  &       & 20    & 24.9 \\
			& \multicolumn{1}{r}{10} &       & 20    & 12.9  &       & 20    & 35.2  &       & 20    & 39.5  &       & 20    & 3.6   &       & 20    & 22.7  &       & 20    & 32.7  &       & 20    & 32.2  &       & 20    & 8.1 \\
			& \multicolumn{1}{r}{12} &       & 18    & 368.1 &       & 18    & 379.3 &       & 19    & 209.4 &       & 20    & 1.0   &       & 20    & 19.0  &       & 20    & 20.8  &       & 20    & 18.8  &       & 20    & 17.8 \\
			\midrule
			\multirow{5}[2]{*}{150} & \multicolumn{1}{r}{3} &       & 20    & 1538.0 &       & 14    & 3086.7 &       & 20    & 2508.5 &       & 20    & 120.9 &       & 20    & 800.0 &       & 18    & 3028.9 &       & 20    & 2662.2 &       & 20    & 233.1 \\
			& \multicolumn{1}{r}{5} &       & 20    & 584.1 &       & 20    & 1488.4 &       & 19    & 1467.4 &       & 20    & 107.2 &       & 20    & 398.4 &       & 20    & 1751.4 &       & 20    & 1434.7 &       & 20    & 178.5 \\
			& \multicolumn{1}{r}{8} &       & 19    & 498.7 &       & 18    & 1053.2 &       & 18    & 1022.9 &       & 20    & 106.4 &       & 20    & 257.7 &       & 20    & 691.6 &       & 20    & 633.1 &       & 20    & 145.3 \\
			& \multicolumn{1}{r}{10} &       & 20    & 164.4 &       & 20    & 422.1 &       & 20    & 541.1 &       & 20    & 103.1 &       & 20    & 210.4 &       & 20    & 439.3 &       & 20    & 406.3 &       & 20    & 151.8 \\
			& \multicolumn{1}{r}{12} &       & 19    & 272.2 &       & 20    & 291.8 &       & 20    & 402.0 &       & 20    & 101.7 &       & 20    & 192.7 &       & 20    & 303.6 &       & 20    & 265.7 &       & 20    & 135.1 \\
			\midrule
			\multicolumn{2}{l}{total/avg} &       & 394   & 200.5 &       & 388   & 390.0 &       & 394   & 354.9 &       & 400   & 30.6  &       & 400   & 102.5 &       & 398   & 335.2 &       & 400   & 290.9 &       & 400   & 49.8 \\
			\bottomrule
		\end{tabular}%
	}
	\label{tab:resnew_Class_V_VI}%
\end{table}%
	
\section{Conclusions}
\label{sec:conclusion}

In this work, we have proposed pseudo-polynomial arc-flow (AF) formulations to solve the problem of scheduling a set of jobs on a set of identical parallel machines by minimizing the total weighted completion time. A first straight AF model already benefits from the fact that schedules follow a {weighted shortest processing time} (WSPT) rule on each machine. A second enhanced AF model (EAF) improves AF by embedding further reduction techniques. EAF needs on average less than $50\%$ of the variables required by AF, and, in some cases, this number drops to less than $20\%$. Computational experiments showed that EAF is very effective and solves instances with up to $1000$ jobs and $30$ machines, performing much better than direct time-indexed formulations and even advanced branch-and-price methods. Still, troubles might arise when the processing times of the jobs assume large values, and more research is envisaged to solve large instances of this type, possibly involving heuristics and column generation methods.

An interesting future research direction also involves the application of the discussed techniques to problems with release dates and/or setup times. In such cases, optimal solutions might nor respect the WSPT sorting, and thus these problems might be very challenging for AF models. In general, in the scheduling area much work has been done on time-indexed formulations, but the application of AF models is new and could lead to large computational benefits in the solution of many problems.

\section*{Acknowledgments}
This research was funded by the CNPq - Conselho Nacional de Desenvolvimento Cient{\'i}fico e Tecnol{\'o}gico, Brazil, grant No. 234814/2014-4, and by Universit{\`a} degli Studi di Modena e Reggio Emilia, project FAR 2015 - Applicazioni della Teoria dei Grafi nelle Scienze, nell'Industria e nella Societ{\`a}. We thank Kerem B\"{u}lb\"{u}l and Halil {\c{S}en} for kindly providing us with additional computation results, {and Daniel Kowalczyk and Roel Leus for sharing the instances of benchmark set 2.} \rev{We also thank three anonymous referees whose comments greatly improved the quality of the paper.}

\bibliographystyle{mmsbib}
\bibliography{ref}

\begin{thebibliography}{37}
\providecommand{\natexlab}[1]{#1}
\providecommand{\url}[1]{\texttt{#1}}
\expandafter\ifx\csname urlstyle\endcsname\relax
  \providecommand{\doi}[1]{doi: #1}\else
  \providecommand{\doi}{doi: \begingroup \urlstyle{rm}\Url}\fi

\bibitem[Ahuja \emph{et~al}.(1993)Ahuja, Magnanti, and Orlin]{AMO93}
Ahuja, R.~K., Magnanti, T.~L. and Orlin, J.~B.
\newblock \emph{Network flows: theory, algorithms, and applications}.
\newblock Prentice-Hall, Upper Saddle River, 1993.

\bibitem[Azizoglu and Kirca(1999)]{AzizogluandKirca19999b}
Azizoglu, M. and Kirca, O. (1999), On the minimization of total weighted flow
  time with identical and uniform parallel machines.
\newblock \emph{European Journal of Operational Research}, v. 113, \penalty0 n.
  1, p. \penalty0 91--100.

\bibitem[Belouadah and Potts(1994)]{BelouadahandPotts1994}
Belouadah, H. and Potts, C. (1994), Scheduling identical parallel machines to
  minimize total weighted completion time.
\newblock \emph{Discrete Applied Mathematics}, v. 48, \penalty0 n. 3, p.
  \penalty0 201--218.

\bibitem[Bowman(1959)]{Bowman1959}
Bowman, E.~H. (1959), The schedule-sequencing problem.
\newblock \emph{Operations Research}, v. 7, \penalty0 n. 5, p. \penalty0
  621--624.

\bibitem[Bruno \emph{et~al}.(1974)Bruno, Coffman, and Sethi]{Bruno1974}
Bruno, J., Coffman, Jr., E.~G. and Sethi, R. (1974), Scheduling independent
  tasks to reduce mean finishing time.
\newblock \emph{Communications of the ACM}, v. 17, \penalty0 n. 7, p. \penalty0
  382--387.

\bibitem[B{\"u}lb{\"u}l and {\c{S}}en(2017)]{BulbulandSen2017}
B{\"u}lb{\"u}l, K. and {\c{S}}en, H. (2017), An exact extended formulation for
  the unrelated parallel machine total weighted completion time problem.
\newblock \emph{Journal of Scheduling}, v. 20, \penalty0 n. 4, p. \penalty0
  373--389.

\bibitem[Chen and Powell(1999)]{ChenandPowell1999}
Chen, Z.-L. and Powell, W.~B. (1999), Solving parallel machine scheduling
  problems by column generation.
\newblock \emph{INFORMS Journal on Computing}, v. 11, \penalty0 n. 1, p.
  \penalty0 78--94.

\bibitem[Christofides and Whitlock(1977)]{CW77}
Christofides, N. and Whitlock, C. (1977), An algorithm for two-dimensional
  cutting problems.
\newblock \emph{Operations Research}, v. 25, p. \penalty0 30--44.

\bibitem[C{\^o}t{\'e} and Iori(2018)]{CI16}
C{\^o}t{\'e}, J.-F. and Iori, M. (2018), The meet-in-the-middle principle for
  cutting and packing problems.
\newblock \emph{INFORMS Journal on Computing}, p. Forthcoming.

\bibitem[Delorme \emph{et~al}.(2016)Delorme, Iori, and Martello]{Delorme2016}
Delorme, M., Iori, M. and Martello, S. (2016), Bin packing and cutting stock
  problems: Mathematical models and exact algorithms.
\newblock \emph{European Journal of Operational Research}, v. 255, \penalty0 n.
  1, p. \penalty0 1--20.

\bibitem[Eastman \emph{et~al}.(1964)Eastman, Even, and Isaacs]{Eastman1964}
Eastman, W.~L., Even, S. and Isaacs, I.~M. (1964), Bounds for the optimal
  scheduling of \textit{n} jobs on \textit{m} processors.
\newblock \emph{Management Science}, v. 11, \penalty0 n. 2, p. \penalty0
  268--279.

\bibitem[Elmaghraby and Park(1974)]{ElmaghrabyPark1974}
Elmaghraby, S.~E. and Park, S.~H. (1974), Scheduling jobs on a number of
  identical machines.
\newblock \emph{AIIE Transactions}, v. 6, \penalty0 n. 1, p. \penalty0 1--13.

\bibitem[Graham \emph{et~al}.(1979)Graham, Lawler, Lenstra, and
  Kan]{Graham1979}
Graham, R., Lawler, E., Lenstra, J. and Kan, A. (1979), Optimization and
  approximation in deterministic sequencing and scheduling: a survey.
\newblock \emph{Annals of Discrete Mathematics}, v. 5, p. \penalty0 287--326.

\bibitem[Herz(1972)]{H72}
Herz, J. (1972), Recursive computational procedure for two-dimensional stock
  cutting.
\newblock \emph{IBM Journal of Research and Development}, v. 16, p. \penalty0
  462--469.

\bibitem[Kawaguchi and Kyan(1986)]{Kawaguchi1986}
Kawaguchi, T. and Kyan, S. (1986), Worst case bound of an {LRF} schedule for
  the mean weighted flow-time problem.
\newblock \emph{SIAM Journal on Computing}, v. 15, \penalty0 n. 4, p. \penalty0
  1119--1129.

\bibitem[Kowalczyk and Leus(2018)]{KowalczykandLeus2018}
Kowalczyk, D. and Leus, R. (2018), A branch-and-price algorithm for parallel
  machine scheduling using {ZDDs} and generic branching.
\newblock \emph{INFORMS Journal on Computing}, p. Forthcoming.

\bibitem[Kramer and Subramanian(2017)]{KramerS2015}
Kramer, A. and Subramanian, A. (2017), A unified heuristic and an annotated
  bibliography for a large class of earliness-tardiness scheduling problems.
\newblock \emph{Journal of Scheduling}, p. Forthcoming.

\bibitem[Li and Yang(2009)]{LiandYang2009}
Li, K. and Yang, S. (2009), Non-identical parallel-machine scheduling research
  with minimizing total weighted completion times: Models, relaxations and
  algorithms.
\newblock \emph{Applied Mathematical Modelling}, v. 33, \penalty0 n. 4, p.
  \penalty0 2145--2158.

\bibitem[Minato(1993)]{Minato1993}
Minato, S.-i.
\newblock Zero-suppressed {BDDs} for set manipulation in combinatorial
  problems.
\newblock \emph{Proceedings of the 30th International Design Automation
  Conference}, DAC '93, p. 272--277, New York, NY, USA. ACM.
\newblock ISBN 0-89791-577-1.
\newblock \doi{10.1145/157485.164890}, 1993.

\bibitem[Mladenovi{\'c} and Hansen(1997)]{MladenovicHansen1997}
Mladenovi{\'c}, N. and Hansen, P. (1997), Variable neighborhood search.
\newblock \emph{Computers \& Operations Research}, v. 24, \penalty0 n. 11, p.
  \penalty0 1097 -- 1100.

\bibitem[Mrad and Souayah(2018)]{MradAndSouayah2018}
Mrad, M. and Souayah, N. (2018), An arc-flow model for the makespan
  minimization problem on identical parallel machines.
\newblock \emph{IEEE Access}, v. 6, p. \penalty0 5300--5307.

\bibitem[Plateau and Rios-Solis(2010)]{PlateauandRiosSolis2010}
Plateau, M.-C. and Rios-Solis, Y. (2010), Optimal solutions for unrelated
  parallel machines scheduling problems using convex quadratic reformulations.
\newblock \emph{European Journal of Operational Research}, v. 201, \penalty0 n.
  3, p. \penalty0 729--736.

\bibitem[Pritsker \emph{et~al}.(1969)Pritsker, Waiters, and
  Wolfe]{PritskerEtAl1969}
Pritsker, A. A.~B., Waiters, L.~J. and Wolfe, P.~M. (1969), Multiproject
  scheduling with limited resources: A zero-one programming approach.
\newblock \emph{Management Science}, v. 16, \penalty0 n. 1, p. \penalty0
  93--108.

\bibitem[Rodriguez \emph{et~al}.(2012)Rodriguez, Blum, Garc{\'i}a-Mart{\'i}nez,
  and Lozano]{Rodriguezetal2012}
Rodriguez, F.~J., Blum, C., Garc{\'i}a-Mart{\'i}nez, C. and Lozano, M. (2012),
  {GRASP} with path-relinking for the non-identical parallel machine scheduling
  problem with minimising total weighted completion times.
\newblock \emph{Annals of Operations Research}, v. 201, \penalty0 n. 1, p.
  \penalty0 383--401.

\bibitem[Rodriguez \emph{et~al}.(2013)Rodriguez, Lozano, Blum, and
  Garc{\'i}a-Mart{\'i}nez]{Rodriguezetal2013}
Rodriguez, F.~J., Lozano, M., Blum, C. and Garc{\'i}a-Mart{\'i}nez, C. (2013),
  An iterated greedy algorithm for the large-scale unrelated parallel machines
  scheduling problem.
\newblock \emph{Computers \& Operations Research}, v. 40, \penalty0 n. 7, p.
  \penalty0 1829--1841.

\bibitem[Ryan and Foster(1981)]{RyanandFoster1981}
Ryan, D. and Foster, B.
\newblock An integer programming approach to scheduling.
\newblock Wren, A. (Ed.), \emph{Computer scheduling of public transport: urban
  passenger vehicle and crew scheduling}, p. 269--280. North-Holland, 1981.

\bibitem[Sarin \emph{et~al}.(1988)Sarin, Ahn, and Bishop]{Sarinetal1988}
Sarin, S.~C., Ahn, S. and Bishop, A.~B. (1988), An improved branching scheme
  for the branch and bound procedure of scheduling n jobs on m parallel
  machines to minimize total weighted flowtime.
\newblock \emph{International Journal of Production Research}, v. 26, \penalty0
  n. 7, p. \penalty0 1183--1191.

\bibitem[Shioura \emph{et~al}.(2018)Shioura, Shakhlevich, and
  Strusevich]{SSS18}
Shioura, A., Shakhlevich, N. and Strusevich, V. (2018), Preemptive models of
  scheduling with controllable processing times and of scheduling with
  imprecise computation: A review of solution approaches.
\newblock \emph{European Journal of Operational Research}, v. 266, \penalty0 n.
  3, p. \penalty0 795 -- 818.

\bibitem[Skutella(2001)]{Skutella2001}
Skutella, M. (2001), Convex quadratic and semidefinite programming relaxations
  in scheduling.
\newblock \emph{Journal of the ACM}, v. 48, \penalty0 n. 2, p. \penalty0
  206--242.

\bibitem[Smith(1956)]{Smith1956}
Smith, W.~E. (1956), Various optimizers for single-stage production.
\newblock \emph{Naval Research Logistics Quarterly}, v. 3, \penalty0 n. 1-2, p.
  \penalty0 59--66.

\bibitem[Sousa and Wolsey(1992)]{SousaWolsey1992}
Sousa, J.~P. and Wolsey, L.~A. (1992), A time indexed formulation of
  non-preemptive single machine scheduling problems.
\newblock \emph{Mathematical Programming}, v. 54, \penalty0 n. 1, p. \penalty0
  353--367.

\bibitem[Unlu and Mason(2010)]{UnluandMason2010}
Unlu, Y. and Mason, S.~J. (2010), Evaluation of mixed integer programming
  formulations for non-preemptive parallel machine scheduling problems.
\newblock \emph{Computers \& Industrial Engineering}, v. 58, \penalty0 n. 4, p.
  \penalty0 785--800.

\bibitem[{Val{\'e}rio de Carvalho}(1999)]{ValeriodeCarvalho1999}
{Val{\'e}rio de Carvalho}, J. (1999), Exact solution of bin-packing problems
  using column generation and branch-and-bound.
\newblock \emph{Annals of Operations Research}, v. 86, \penalty0 n. 0, p.
  \penalty0 629--659.

\bibitem[Val{\'e}rio~de Carvalho(2002)]{ValeriodeCarvalho2002}
Val{\'e}rio~de Carvalho, J. (2002), {LP} models for bin packing and cutting
  stock problems.
\newblock \emph{European Journal of Operational Research}, v. 141, \penalty0 n.
  2, p. \penalty0 253--273.

\bibitem[{Van den Akker} \emph{et~al}.(1999){Van den Akker}, Hoogeveen, and
  {Van de Velde}]{vandenAkkeretal1999}
{Van den Akker}, J.~M., Hoogeveen, J.~A. and {Van de Velde}, S.~L. (1999),
  Parallel machine scheduling by column generation.
\newblock \emph{Operations Research}, v. 47, \penalty0 n. 6, p. \penalty0
  862--872.

\bibitem[Webster(1995)]{Webster1995}
Webster, S. (1995), Weighted flow time bounds for scheduling identical
  processors.
\newblock \emph{European Journal of Operational Research}, v. 80, \penalty0 n.
  1, p. \penalty0 103--111.

\bibitem[Wolsey(1977)]{W77}
Wolsey, L. (1977), Valid inequalities, covering problems and discrete dynamic
  programs.
\newblock \emph{Annals of Discrete Mathematics}, v. 1, p. \penalty0 527--538.

\end{thebibliography}

\end{document}